\documentclass[letterpaper,11pt]{article}

\usepackage{amsthm, amssymb, amsmath, amsfonts, url, enumerate, mathrsfs}
\usepackage{fancyhdr}
\usepackage{graphics}
\usepackage{hyperref}
\usepackage{cleveref}

\usepackage[pdftex]{graphicx}
\usepackage[dvipsnames, x11names]{xcolor}

\linespread{1.4}
\setlength{\rightmargin}{0cm} \setlength{\textwidth}{17cm}
\setlength{\textheight}{21cm} \setlength{\oddsidemargin}{0cm}
\setlength{\evensidemargin}{0cm} \setlength{\topmargin}{0.5cm}
\setlength{\footskip}{1.0cm}

\newcounter{prn}
\setcounter{prn}{1}


\newtheorem{thm}{Theorem}[section]

\newtheorem{lemma}[thm]{Lemma}
\newtheorem{defn}[thm]{Definition}
\newtheorem{corollary}[thm]{Corollary}

\newtheorem{note}[thm]{Note}

\crefname{lemma}{lemma}{lemmas}
\Crefname{lemma}{Lemma}{Lemmas}
\crefname{thm}{theorem}{theorems}
\Crefname{thm}{Theorem}{Theorems}
\crefname{corollary}{corollary}{corollaries}
\Crefname{corollary}{Corollary}{Corollaries}
\crefname{proposition}{proposition}{proposition}
\Crefname{proposition}{Proposition}{Proposition}


\def\F{{\mathbb{F}}}

\def\N{{\mathbb{N}}}

\def\C{{\mathbb{C}}}

\def\E{{\mathbb E}}






\newcommand{\st}[1]{\{#1\}}


\def\to{\rightarrow}

\makeatletter
\newcommand{\newparallel}{\mathrel{\mathpalette\new@parallel\relax}}
\newcommand{\new@parallel}[2]{%
  \begingroup
  \sbox\z@{$#1T$}
  \resizebox{!}{\ht\z@}{\raisebox{\depth}{$\m@th#1/\mkern-5mu/$}}%
  \endgroup
}
\makeatother


\newcommand{\veps}{\varepsilon}

\usepackage{amsmath,amsthm,amsfonts,amssymb,color,latexsym, mathrsfs, url, enumerate}
\usepackage{setspace,fullpage} 
\usepackage{tikz}
\usepackage{bbm}
\usetikzlibrary{quantikz2}
\usetikzlibrary{matrix, positioning}

\usepackage[margin=1in]{geometry}
\usepackage{accents}
\usepackage{fancyhdr}
\usepackage{graphicx} 

\usepackage{cleveref}
\crefname{lemma}{lemma}{lemmas}
\Crefname{lemma}{Lemma}{Lemmas}
\crefname{thm}{theorem}{theorems}
\Crefname{thm}{Theorem}{Theorems}

\graphicspath{{./assets/}}
\singlespacing

\makeatletter
\let\@fnsymbol\@arabic

\title{Quantum Worst-Case to Average-Case Reduction for\\Matrix-Vector Multiplication
\\[0.5em]}

\author{
Divesh Aggarwal\thanks{National University of Singapore. Email: divesh.aggarwal@gmail.com}\and
Dexter Kwan\thanks{National University of Singapore. Email: dexter.kwxn@gmail.com}
}
\date{}

\begin{document}

\maketitle
\begin{abstract}
Worst-case to average-case reductions are a cornerstone of complexity theory, providing a bridge between worst-case hardness and average-case computational difficulty. While recent works have demonstrated such reductions for fundamental problems using deep tools from additive combinatorics, these approaches often suffer from substantial complexity and suboptimal overheads. In this work, we focus on the quantum setting, and provide a new reduction for the Matrix-Vector Multiplication problem that is more efficient, and conceptually simpler than previous constructions. By adapting hardness self-amplification techniques to the quantum domain, we obtain a quantum worst-case to average-case reduction with improved dependence on the success probability, laying the groundwork for broader applications in quantum fine-grained complexity.

\end{abstract}

\newpage

\tableofcontents
\newpage

\pagenumbering{arabic}
\setcounter{page}{1}

\section{Introduction}

\subsection{Background and Motivation}

Worst-case to average-case reductions form a fundamental theme in the theory of computation. These reductions transform algorithms that succeed on a small fraction of inputs into ones that succeed with high probability on every input, thereby converting worst-case hardness into average-case hardness. Such transformations have played a crucial role in cryptography, derandomization, fine-grained complexity, and post-quantum cryptographic foundations.

Classically, worst-case to average-case reductions have been established for a handful of problems, often relying on random self-reducibility (e.g., for the discrete logarithm and RSA problems) or using heavy tools from additive combinatorics, as in the recent work of Asadi, Golovnev, Gur, and Shinkar~\cite{Asadi_Golovnev_Gur_Shinkar_2022}. Their framework, based on probabilistic versions of the Bogolyubov-Ruzsa lemma, provided the first efficient worst-case to average-case reductions for a broad class of fine-grained problems, such as matrix multiplication, online matrix-vector multiplication (OMv), and various data structure problems. However, the reduction overhead incurred in their work was quasi-polynomial in the inverse success rate, namely $2^{O(\log^5(1/\alpha))}$, limiting practical applicability.

More recently, Hirahara and Shimizu~\cite{Hirahara_Shimizu_2023} proposed a fundamentally different and significantly simpler framework for hardness self-amplification. Instead of relying on additive combinatorics, they utilized upward one-query reductions and expansion properties of appropriately constructed bipartite graphs. Their techniques yield only a polynomial overhead in $1/\alpha$ and lead to conceptually cleaner proofs. However, these advances were established only in the classical setting.

In the quantum domain, Asadi, Golovnev, Gur, Shinkar, and Subramanian~\cite{Asadi_Golovnev_Gur_Shinkar_Subramanian_2022} extended the additive combinatorics-based framework to quantum algorithms, providing quantum worst-case to average-case reductions for all linear problems. Their work crucially leveraged the power of quantum singular value transformations and new quantum local correction procedures. Nevertheless, the quantum reductions inherited the heavy combinatorial complexity from their classical precursors.

\subsection{Our Contribution}

In this work, we introduce a new quantum worst-case to average-case reduction for the Matrix-Vector Multiplication problem that fundamentally improves upon prior approaches both in simplicity and in efficiency. Specifically, we adapt the hardness self-amplification techniques of Hirahara and Shimizu~\cite{Hirahara_Shimizu_2023} to the quantum setting, overcoming the technical barriers that previously necessitated intricate combinatorial arguments.

In this paper, we define oracle access for matrices and vectors as follows. We let quantum algorithms access a matrix $M\in\F^{n\times n}$ via a unitary oracle $U_M$ that does the following map
\[ U_M\ket{j, k, z} = \ket{j, k, z\oplus M_{jk}} \]
for all $j,k\in [n]$ and $z\in \F$, and $\oplus$ denotes addition over $\F$. Similarly for a vector $v\in\F^n$, oracle access means access to unitary $U_v$ such that
\[ U_v\ket{j, z} = \ket{j, z\oplus v_j} \]
for all $j\in [n]$ and $z\in \F$. In this paper, we use the notation $\Pi_{Mv}$ to denote an orthogonal projection on the output register of $\sf ALG$ which represents measuring the outcome $Mv \in \F^n$ in the standard basis. 

Our main result shows that given a quantum algorithm that correctly computes $Mv$ for a random matrix $M$ and vector $v$ with probability at least $\alpha$, one can construct a quantum algorithm that succeeds on \emph{every} input with high probability, while incurring a \emph{nearly quadratic} overhead in $1/\alpha$, namely $O(\alpha^{-2})$, as opposed to the previously known overheads of $O(\alpha^{-9/2})$ or worse. Specifically, we have the following result.
\begin{thm} [Informally stated; see \Cref{thm:main}]
\label{thm:main_informal}
    Let $\F = \F_p$ be a prime field, $n\in\N$, and $\alpha := \alpha(n) \in (0, 1]$. Suppose that there exists a quantum algorithm {\sf ALG} that has oracle access to a matrix $M\in \F^{n\times n}$ and a vector $v$, makes $Q(n)$ queries, and satisfies
\[
\Pr_{\substack{M, v, \\ \text{\sf ALG}}} \left[ \text{\sf ALG}^{M, v} = Mv \right] 
= \underset{\substack{M \in \mathbb{F}^{n \times n} \\ v \in \mathbb{F}^n}}{\mathbb{E}}
\left[ \left\| \Pi_{Mv} \text{\sf ALG}^{M, v} \ket{0} \right\|^2 \right] \geq \alpha \, .
\]
Then, for every constant $\delta > 0$, there exists a worst-case algorithm $\sf ALG'$ that makes $O(\alpha^{-2})$ queries to $\sf ALG$, makes $O((Q(n)+n^{3/2})\cdot \alpha^{-2}\cdot \text{\rm poly}\log(1/\alpha))$ queries to $U_M$ and $U_M^\dagger$, and succeeds on all inputs with high probability:
\[
\forall M\in\F^{n\times n}, v\in\F^n, \Pr_{\text{\sf ALG}'}\left[(\text{\sf ALG}')^{M,v} = Mv\right] = 
\left\| \Pi_{Mv} (\text{\sf ALG}')^{M,v} |0\rangle \right\|^2 \geq 1 - \delta.
\]
\end{thm}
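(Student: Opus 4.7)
The approach combines a classical random self-correction identity (in the spirit of Hirahara-Shimizu) with quantum amplitude amplification around a Grover-based verification oracle.

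First, I exploit bilinearity of matrix-vector multiplication: for independent uniformly random $R_1, R_2 \in \F^{n\times n}$ and $r \in \F^n$,
\[
Mv \;=\; (M+R_1)(v+r) \;-\; R_1(v+r) \;-\; (M+R_2)\,r \;+\; R_2\,r.
\]
The terms $R_1(v+r)$ and $R_2 r$ are computed classically with no oracle access, so evaluating $Mv$ reduces to the two matrix-vector products $(M+R_1)(v+r)$ and $(M+R_2)r$, each on an input whose marginal distribution is uniform over $\F^{n\times n}\times\F^n$. Hence each call to $\sf ALG$ is correct with marginal probability $\alpha$ by the theorem hypothesis. The oracle $U_{M+R_i}$ is simulated by one query to $U_M$ plus a classical XOR with the known $R_i$, adding $O(Q(n))$ queries to $U_M$ per round beyond those absorbed inside $\sf ALG$.

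Second, I construct a verification subroutine $\mathcal V$: given a candidate $\tilde w \in \F^n$, use Grover search over $i \in [n]$ to look for an index where $\tilde w_i \neq (Mv)_i = \sum_j M_{ij} v_j$. Each inner sum uses $O(n)$ queries to $U_M$, and Grover supplies an $O(\sqrt n)$ factor, so $\mathcal V$ uses $O(n^{3/2})$ queries with error driven to $1/\text{polylog}(1/\alpha)$ by a logarithmic number of repetitions. I then package the self-reduction into a coherent subroutine $\mathcal B$ that samples $(R_1, R_2, r)$ in superposition, simulates the randomized oracles via $U_M$, invokes $\sf ALG$ on the two random inputs, applies the explicit corrections, and writes the candidate $\tilde w$ to the output register. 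Using $\mathcal V$ as the reflection about the ``good'' subspace where $\tilde w = Mv$, quantum amplitude amplification boosts $\mathcal B$'s joint success probability $p$ to $1-\delta$ in $O\bigl(\sqrt{1/p}\cdot\log(1/\delta)\bigr)$ rounds.

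The main obstacle is a lower bound on $p$. A naive union bound yields only the useless $4\alpha-3$. The useful observation is that, conditioned on $r$, the two calls use independent matrix shifts $R_1$ and $R_2$, so $p = \E_r[h(v+r)\,h(r)]$, where $h(u) = \Pr_{R,\,\mathrm{internal}}[\sf ALG^{M+R, u}=(M+R)u]$ satisfies $\E[h]=\alpha$. Establishing $p = \Omega(\alpha^4)$ for this autocorrelation — which a worst-case $v$ could, in principle, drive to zero — requires the Hirahara-Shimizu amplification machinery, and its translation to the quantum setting (where classical successes become squared amplitudes $\|\Pi_{M'v'}\sf ALG^{M',v'}|0\rangle\|^2$) is the technical heart of the work. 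With $p = \Omega(\alpha^4)$, amplitude amplification uses $O(\alpha^{-2})$ rounds of $\mathcal B$; each round makes $O(1)$ calls to $\sf ALG$ and $O(Q(n))$ queries to $U_M$ for oracle simulation, while each verification $\mathcal V$ costs $O(n^{3/2})$ queries. Summing over the amplified rounds with $\text{polylog}(1/\alpha)$ overhead for verification-error boosting yields the claimed $O(\alpha^{-2})$ calls to $\sf ALG$ and $O((Q(n)+n^{3/2})\alpha^{-2}\,\text{polylog}(1/\alpha))$ queries to $U_M$ and $U_M^\dagger$.
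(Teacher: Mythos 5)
Your self-correction identity
\[
Mv = (M+R_1)(v+r) - R_1(v+r) - (M+R_2)r + R_2 r
\]
is not the skeleton of the paper's reduction, and the gap you flag in your own plan is fatal rather than a detail to be filled in later. As you note, the joint success probability is $p = \E_r\bigl[h(v+r)\,h(r)\bigr]$ with $\E_u[h(u)]=\alpha$; for a worst-case $v$ this autocorrelation can simply be zero (take $h$ to be the indicator of a set $S$ of density $\alpha$ with $S \cap (v+S)=\emptyset$). So there is no $\Omega(\alpha^4)$ lower bound to be found, and the Hirahara--Shimizu machinery does not rescue this identity: their contribution is a \emph{different reduction structure}, not an autocorrelation estimate that plugs into yours. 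Your outline therefore has no valid route to boosting $p$, and everything downstream (amplitude amplification, the query counts) is built on that missing step.

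The paper avoids the autocorrelation trap by never asking a single call to $\sf ALG$ to be correct on a worst-case $(M,v)$ pair. Instead it works block-wise with $k=O(\log(1/\alpha))$ and $d=n/k$. It (i) shows the set of ``good'' vectors $v$ with $\Pr_{M,{\sf ALG}}[{\sf ALG}^{M,v}=Mv]\geq\alpha/2$ has density $\geq\alpha/2$ (\Cref{lma:1.5}); (ii) for a good $v$ and random $M\in\F^{d\times n}$, embeds $M$ as one block of a random $\overline{M}\in\F^{n\times n}$ and uses the Direct Product Lemma's sampler property plus the MvPV verifier to get a $0.99$-reliable answer after $O(1/\alpha)$ verified repetitions (\Cref{lma:1.6}); (iii) passes from random $M$ to worst-case $M$ with the two-term Blum--Luby--Rubinfeld split $M=R_1+R_2$ (\Cref{cor:1.7}); (iv) passes from good $v$ to $99\%$ of $v\in\F^d$ again via the sampler — embedding $v$ into a random $\overline{v}$ and showing that, because good vectors are $\alpha/2$-dense, the embedded $\overline{v}$ is good with probability $\Omega(\alpha)$ for most $v$ (\Cref{lma:1.8}); (v) handles worst-case $v$ via $v=r_1+r_2$ (\Cref{cor:1.10}); and finally (vi) tiles $M$ into $k^2$ blocks $M^{(i,j)}\in\F^{d\times d}$, amplifies each block's success to $1-0.01/k^2$ by $O(\log k)$ repetitions, and union-bounds. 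Steps (ii) and (iv) — the ``good vector'' density argument combined with the $(\delta,c)$-sampler from the direct product lemma — are precisely what replace the autocorrelation bound you cannot prove. Also note that the paper amplifies by classical repetition against the MvPV verifier and a simple union bound, not by coherent amplitude amplification over the random seed; this is cleaner and avoids having to argue that the verifier's error composes correctly inside an amplitude-amplification reflection.

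In short: you identified correctly that the hard part is the success-probability lower bound and that it needs Hirahara--Shimizu, but your chosen decomposition does not admit such a bound, and you did not set up the block/direct-product structure that actually makes the sampler lemma applicable.
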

\section{Proof Overview}
For a worst-case to average-case reduction, we begin with an average-case algorithm with success probability $\alpha$. We wish to amplify this success probability such that it succeeds with high probability for every input. Note that the average-case algorithm's success rate is taken over all inputs, and so, it may be that for some input matrices and vectors, we have a lower probability of success (e.g. $\alpha/n$), while for others a higher probability (e.g $2\alpha$). This means that we cannot simply repeat the algorithm and verify the answer until we get a correct answer, since we could have the worst inputs blowing up the number of iterations. Instead, we use the framework by \cite{Hirahara_Shimizu_2023}, which uses the reduction of the direct product theorem. Specifically, given an instance $x$, sample $k$ independent random instances $x_1, x_2,...,x_k$ of $f$. Then, sample $i\sim [k]$ and set $\bar{x} = (x_1, ..., x_{i-1}, x, x_{i+1}, ..., x_k)$. Run the average case solver $\mathcal{M}$ for the $k$-wise direct product problem $f^k$ and obtain $(y_1, ..., y_k) = \mathcal{M}(\bar{x})$. If $f(x) = y_i$, output $y_i$. Note that this requires a verifier for our problem.

In general, the reduction can be seen as a one-query randomized reduction $\mathcal{R}^\mathcal{M}(x)$, where $\mathcal{M}$ is an average-case solver with success probability $\alpha$. This reduction results in an edge-weighted bipartite graph $G=(X,Y,W)$, where $X, Y$ are distributions of the input spaces of $f$ and $f^k$ respectively. Denoting $P(x,y)$ to be the probability $\mathcal{R}^\mathcal{M}(x)$ produces $y$, we have that the edge weight of $(x,y)$ is $W(x,y):= \Pr(X=x)P(x,y)$. We refer to this $G$ as our query graph. \cite{Hirahara_Shimizu_2023} showed that this query graph is a $(\delta,c)$-sampler for density $\veps$, which has the property that for a $(1-\delta)$-fraction of $x\in X$, the algorithm $\mathcal{M}$ succeeds on at least $(1-c)\veps$-fraction of neighbours of $x$. This allows us to, given a verifier, repeatedly run $\mathcal{R}^\mathcal{M}(x)$ until we get a correct answer with high probability.

\subsection{Matrix-vector Multiplication (MvM) and Direct Product Theorem}

Given a matrix $M\in\F^{n\times n}$ and $v\in\F^n$, we first construct an algorithm that does matrix-vector multiplication for $M\in\F^{d\times d}$ and $v\in\F^d$ for a chosen $d$, where $d$ divides $n$, while amplifying the success rate in the process. Denoting $k = n / d$, we proceed by dividing the matrix in $k^2$ blocks and the vector input into $k$ blocks (see \Cref{fig:3}). To compute the original matrix-vector product, we need to simply compute $k^2$ many sub-products, and combine them appropriately. If we can compute each sub-product with probability $1- 0.01/k^2$, then by union bound we can compute the full product with probability $0.99$. This is the idea for the reduction.

For the reduction, we rely on the presence of good vectors. We say that a vector $v\in\F^n$ is good if $\Pr_{M, {\sf ALG}}[{\sf ALG}^{M, v} = Mv] \ge \alpha / 2$. By an averaging argument, we show that at least $\frac{\alpha}{2}|\F^n|$ good vectors exist in $\F^n$. Then, given input $M\in\F^{d\times n}$, by sampling $k$ independent random $M_i \in\F^{d\times n}$, we can apply the reduction idea to construct a $\F^{n\times n}$ matrix. Along with an idea by \cite{Blum_Luby_Rubinfeld_1993}, we get an algorithm for MvM where the matrix is of size $d\times n$. We then repeat this idea to split a $d\times n$ matrix into $k$ independent $d\times d$ matrices, giving us an algorithm for MvM that succeeds with high probability, where $M\in\F^{d\times d}, v\in\F^d$. So by boosting the success probability to $1- 0.01/k^2$, we can compute MvM for $M\in\F^{n\times n}, v\in \F^n$, with probability $0.99$.

In the quantum setting, the superposition of the matrix entries prevents us from naively applying unitaries to each sub-matrix. To overcome this, we will require some extra work for concatenating and extracting the sub-matrices and sub-vectors. For concatenation, we use a multiplexor circuit along with shifting of indices to get the correct combination of matrices. For extraction of matrices, we also have to be careful with the indices of the matrices and shift them accordingly to achieve the correct indexing. We specify in detail how to do this in section 4. 

\subsection{Comparison to Prior Work}

Compared to the additive combinatorics-based approach of~\cite{Asadi_Golovnev_Gur_Shinkar_Subramanian_2022}, our reduction avoids the use of complex Fourier-analytic and additive-combinatorial machinery. Compared to classical hardness self-amplification techniques~\cite{Hirahara_Shimizu_2023}, we show that the key ideas can be transplanted into the quantum world with only modest technical modifications.

We believe that the simplicity and modularity of our approach will make it easier to extend quantum worst-case to average-case reductions to new settings, and could help in designing fine-grained quantum reductions for problems of cryptographic or algorithmic significance.

\section{Preliminaries}
For $n\in\N$, we denote $[n] = \st{1, . . . , n}$. Also, we write $x \sim \mu$ to denote that $x$ is drawn from a distribution $\mu$. For a distribution $\mu$, we let $\text{supp}(\mu) = \st{x: \mu(x) > 0}$ denote the support of $\mu$.
\subsection{Union Bound}
In our proof, we will often apply the union bound to achieve lower bounds on our probability of success. We first state the lemma.
\begin{lemma}[Union Bound]
\label{union_bound}
For a countable set of events $A_1, A_2, ...$, we have
\[ \Pr\left( \bigcup_{i=1}^\infty A_i \right) \le \sum_{i=1}^\infty \Pr(A_i)\]
\end{lemma}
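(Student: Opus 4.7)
The plan is to prove the countable union bound by reducing it to countable additivity of the probability measure via a standard disjointification trick. The key observation is that an arbitrary countable union can be rewritten as a disjoint union of smaller sets, each contained in the corresponding original event, which lets us upgrade monotonicity plus $\sigma$-additivity into the subadditivity inequality we want.

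Concretely, I would first define $B_1 := A_1$ and, for $i \geq 2$,
\[
B_i := A_i \setminus \bigcup_{j=1}^{i-1} A_j.
\]
Then I would verify three elementary facts: (i) the $B_i$ are pairwise disjoint, since $B_i$ excludes everything in $A_1, \ldots, A_{i-1}$, and any element of $B_j$ with $j<i$ lies in some $A_\ell$ with $\ell \le j < i$; (ii) $\bigcup_{i=1}^\infty B_i = \bigcup_{i=1}^\infty A_i$, since each $B_i \subseteq A_i$ gives one inclusion, and for the other inclusion any $x \in \bigcup A_i$ lies in $B_{i^*}$ where $i^*$ is the least index with $x \in A_{i^*}$; and (iii) $B_i \subseteq A_i$, giving $\Pr(B_i) \le \Pr(A_i)$ by monotonicity of $\Pr$.

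Combining these, countable additivity applied to the disjoint family $\{B_i\}$ yields
\[
\Pr\!\left( \bigcup_{i=1}^\infty A_i \right) = \Pr\!\left( \bigcup_{i=1}^\infty B_i \right) = \sum_{i=1}^\infty \Pr(B_i) \leq \sum_{i=1}^\infty \Pr(A_i),
\]
which is exactly the claimed inequality. (If the right-hand side diverges, the bound holds trivially, so one may assume convergence without loss of generality.)

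There is no real obstacle here, since this is a textbook fact; the only step that requires mild care is the disjointification itself, making sure the $B_i$ actually partition the union. An alternative route would be to prove the finite case by induction using $\Pr(A \cup B) = \Pr(A) + \Pr(B) - \Pr(A \cap B) \le \Pr(A)+\Pr(B)$, then pass to the countable case via continuity of measure from below applied to the increasing sequence $\bigcup_{i=1}^N A_i$; I would mention this as an equivalent route but prefer the disjointification argument above for its brevity.
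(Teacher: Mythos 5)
Your proof via disjointification is correct and is the standard textbook argument for countable subadditivity. Note, however, that the paper does not prove this lemma at all: it is stated as a well-known fact, and the surrounding text only shows how the bound is \emph{applied} (passing to complements to lower-bound $\Pr(\bigcap_i A_i)$ by $1 - n\delta$). So there is no paper proof to compare against; your argument is simply a correct self-contained justification that the paper chose to omit. One very minor point: the parenthetical about assuming convergence of the right-hand side without loss of generality is unnecessary, since the inequality $\sum_i \Pr(B_i) \le \sum_i \Pr(A_i)$ holds term-by-term and is valid in the extended reals regardless of whether the series on the right converges.
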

In our proof, we will use the union bound argument as follows. Suppose we have events $A_1, A_2, ..., A_n$ that each succeed with probability at least $1-\delta$. Then denoting $A_k'$ as the complement event of $A_k$, we have that $\Pr(A_k) \le \delta$ for all $k\in [n]$. So that 
\[ \Pr\left( \bigcap_{i=1}^n A_i \right) = 1- \Pr\left( \bigcup_{i=1}^n A_i' \right) \ge 1- \sum_{i=1}^n \Pr (A_i') \ge 1 - n\delta \]
\subsection{Quantum Unitary Oracles}

\textbf{Quantum Unitary Oracles.} We let quantum algorithms access a matrix $M\in\F^{n\times n}$ via a unitary oracle $U_M$ that does the following map
\[ U_M\ket{j, k, z} = \ket{j, k, z\oplus M_{jk}} \]
for all $j,k\in [n]$ and $z\in \F$, and $\oplus$ denotes addition over $\F$. Similarly for a vector $v\in\F^n$, oracle access means access to unitary $U_v$ such that
\[ U_v\ket{j, z} = \ket{j, z\oplus v_j} \]
for all $j\in [n]$ and $z\in \F$.

\subsection{Average-case quantum algorithms} 
An average-case quantum algorithm ALG is one that succeeds with probability at least $\alpha$ in expectation over (uniformly) random inputs. This probability is taken to be the probability of measuring the correct output on the output register in the standard basis. Specifically, we define it as follows.
\begin{defn} 
Suppose that ${\sf ALG}$ computes some function $f$ over some known measurable domain $V$, then we define the probability of success as follows:
\[ \Pr_{\substack{\textsf{ALG} \\ v\in V}}[\textsf{ALG}(v) = f(v)] := \E_{v\in V}[ |\Pi_{f(v)}\textsf{ALG}(v)|^2 ] \ge \alpha \]
where $\Pi_{f(v)}$ denotes an orthogonal projection on the output register of $\sf ALG$ which represents measuring the outcome $f(v)$ in the standard basis. 
Note that this probability is over the internal quantum randomness and the inputs, which is highlighted by the notation used.
\end{defn}
With this definition in mind, we now define what a noisy quantum oracle means.
\\\\
\textbf{Noisy quantum oracles.} For a function $f:\F^n \to \F^m$, we can consider a unitary oracle $U_f$ which on input $x\in \F^n, z\in \F^m$ performs the map
\[ U_f\ket{x}\ket{z} = \beta_{succ}^x \ket{x}\ket{z +f(x)} + \beta_{fail}^x \ket{x}\ket{\psi(x)} \]
where $\beta_{succ}^x, \beta_{fail}^x\in\C$ such that $|\beta_{succ}^x|^2 + |\beta_{fail}^x|^2 = 1$, and the normalized state $\ket{\psi (x)}$ could be an arbitrary superposition over $\ket{z+v}$ for $v\in \F^m \setminus \st{f(x)}$, and $+$ denotes component-wise addition for vectors over $\F$. We can think of $U_f$ as a quantum analogue of a classical probabilistic algorithm for computing $f$, which outputs the correct value of $f(x)$ with probability $|\beta_{succ}^x|^2$ when measuring the second register in the computational basis.
\subsection{Sampler}
Consider a edge-weighted bipartite graph $Q=(X,Y,W)$, whose weights are given by $W \in [0,1]^{X\times Y}$. Assume that $W$ is normalized such that $\sum_{x,y} W(x,y) = 1$, which specifies a distribution over $X\times Y$. This also induces the marginal distributions $\mu \in [0,1]^X, \nu\in [0,1]^Y$ as such:
\[ \mu(x) = \sum_{y\in Y} W(x,y) ,\quad \nu(y) = \sum_{x\in X} W(x,y) \]
Consider a measure $w:Y\to [0,1]$. We define the expectation of $w$ as follows:
\[ \E [w] = \sum_{y\in Y} \nu(y) w(y) \]
We say that a measure $w\in[0,1]^Y$ is $\veps$-dense if $\E[w] \ge \veps$. Similarly, we can define it for $X$. We also say a set $U\subset Y$ is $\veps$-dense if $\Pr[Y\in U] \ge \veps$ with respect to the distribution $\nu$.
\begin{defn}(Sampler)
    For $0<c<\delta$, $Q=(X,Y,W)$ is a $(\delta, c)$-sampler for density $\veps$, if for any $\veps$-dense measure $w\in[0,1]^Y$, we have
    \[ \Pr_{x\sim \mu}\bigg[ \E[w(Y) |X=x] \le \E[w(Y)](1-c) \bigg]\le \delta \]
\end{defn}
In our case, our reduction is done via uniform random sampling, and thus we may consider the simplified definition for unweighted graphs.
\begin{defn}(Sampler, unweighted)
    For $0<c<\delta$, $Q=(X,Y,W)$ is a $(\delta, c)$-sampler for density $\veps$, if for any $\veps$-dense subset $U\subset Y$, we have
    \[ \Pr_{x\sim \mu}\bigg[ \Pr[Y\in U | X=x] \le \Pr[Y\in U](1-c) \bigg]\le \delta \]
\end{defn}
Intuitively, a sampler allows us to, for almost all $x$, via the reduction, land within a chosen $\veps$-dense subset $U$ with high probability. This will be used in the next section on hardness self-amplification.

\subsection{Hardness Self-Amplification}
\cite{Hirahara_Shimizu_2023} developed a framework for hardness self-amplification, which we will be using for the reduction. We state the main results we will be using from their paper. For a distributional problem $(f, \mu)$ with $X = \text{supp}(\mu)$ and $k \in \mathbb{N}$, consider the $k$-wise direct product $(f^k, \mu^k)$, where
\[
f^k(x_1, \ldots, x_k) = \big(f(x_1), \ldots, f(x_k)\big) \quad \text{and} \quad \mu^k(x_1, \ldots, x_k) = \prod_{i \in [k]} \mu(x_i).
\]
Let $Y = X^k$ and $\mathcal{O}$ be an oracle for $(f^k, \mu^k)$. Consider the following well-known reduction $\mathcal{R}^\mathcal{O}$ from $(f, \mu)$ to $(f^k, \mu^k)$ for the direct product problem (\cite{Impagliazzo_Wigderson_1997, Impagliazzo_1995, Goldreich_Nisan_Wigderson_2011}). Given input $x \in X$, 
\begin{itemize}
    \item Sample $i \in [k]$ and $(x_1, \ldots, x_k) \sim \mu^k$.
    \item Let $\bar{x} = (x_1, \ldots, x_{i-1}, x, x_{i+1}, \ldots, x_k)$ be a query.
    \item Given $(y_1, \ldots, y_k) = O(\bar{x})$, output $y_i$.
\end{itemize}
Then, via this reduction, \cite{Hirahara_Shimizu_2023} proved the following results which are central to our matrix-vector multiplication reduction.
\begin{lemma}[Direct Product Lemma]
    \label{lma:DirectProductLemma}
    The query graph of the reduction $\mathcal{R}$ is a $(\delta, c)$-sampler for density $\veps$ for any $\veps, \delta, c$ that satisfy $2\exp{(-kc^2\delta /8)}\le c\veps$.
\end{lemma}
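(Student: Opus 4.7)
The plan is to prove the lemma by contradiction via a double counting argument on the joint distribution of $(X, Y)$ produced by the reduction $\mathcal{R}$. Fix an $\veps$-dense subset $U \subseteq Y = X^k$, let $\eta := \mu^k(U) \ge \veps$, and write $\phi(x) := \Pr[Y \in U \mid X = x]$ for the conditional probability that the reduction lands in $U$ when starting from input $x$. Suppose toward contradiction that the ``bad'' set $B := \{x \in X : \phi(x) \le (1-c)\eta\}$ has mass $\mu(B) > \delta$. The whole argument then rests on estimating the single quantity $P := \Pr[X \in B \text{ and } Y \in U]$ in two different ways.

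First I would upper-bound $P$ by conditioning on $X$: since $\phi(x) \le (1-c)\eta$ for every $x \in B$, we get immediately $P \le \mu(B)(1-c)\eta$. For a matching lower bound I would instead condition on $Y$. The key structural observation is that under $\mathcal{R}$ the marginal of $Y$ is $\mu^k$ and, given $Y = y$, the input $X$ is simply $y_I$ for a uniformly random index $I \in [k]$, so $\Pr[X \in B \mid Y = y] = |\{i : y_i \in B\}|/k$. This rewrites
\[
P = \E_{Y \sim \mu^k}\!\Bigl[\mathbf{1}_U(Y) \cdot \tfrac{1}{k}|\{i : Y_i \in B\}|\Bigr].
\]
Because the coordinates of $Y$ are i.i.d.\ and each lies in $B$ with probability $\mu(B) > \delta$, a standard multiplicative Chernoff bound shows that $|\{i : Y_i \in B\}| \ge (1-c/2)\mu(B) k$ except on an event of probability at most $\exp(-c^2 \mu(B) k / 8)$. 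Intersecting this event with $\{Y \in U\}$ and using $\mu^k(U) = \eta$ yields the lower bound $P \ge (1-c/2)\mu(B)\bigl(\eta - \exp(-c^2\mu(B)k/8)\bigr)$.

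Combining the two estimates gives $(1-c)\eta \ge (1-c/2)\bigl(\eta - \exp(-c^2\mu(B)k/8)\bigr)$. Rearranging, and using $\mu(B) > \delta$ together with $\eta \ge \veps$, this collapses to $2\exp(-c^2 \delta k/8) > c\veps$, directly contradicting the hypothesis $2\exp(-kc^2\delta/8) \le c\veps$. The only place randomness is genuinely exploited is the Chernoff estimate on the number of bad coordinates of a random $Y \sim \mu^k$; the rest of the argument is just swapping the order of conditioning. I expect the main delicate point to be the tuning of the Chernoff deviation parameter---here $c/2$---so that the algebraic slack between $(1-c)\eta$ and $(1-c/2)\eta$ exactly absorbs the Chernoff tail and produces the constant $1/8$ in the exponent stated by the lemma; a different split would only worsen the constant.
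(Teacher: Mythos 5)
The paper does not prove \Cref{lma:DirectProductLemma}; it imports the statement verbatim from Hirahara and Shimizu~\cite{Hirahara_Shimizu_2023}, so there is no in-paper proof to compare against. Your derivation is nonetheless correct and self-contained, and it is the natural double-counting argument for direct-product samplers: estimate the mass $P=\Pr[X\in B,\,Y\in U]$ of the corner event two ways, upper-bounding by conditioning on $X$ (which gives $P\le(1-c)\eta\,\mu(B)$ by the definition of $B$) and lower-bounding by conditioning on $Y$. Both structural facts you invoke check out against the query-graph definition in the paper: the $Y$-marginal is exactly $\mu^k$, and $\Pr[X\in B\mid Y=y]=|\{i:y_i\in B\}|/k$, so a multiplicative Chernoff bound with deviation parameter $c/2$ on the $k$ i.i.d.\ coordinates of $Y\sim\mu^k$ gives $P\ge(1-c/2)\mu(B)\bigl(\eta-\exp(-c^2\mu(B)k/8)\bigr)$; comparing the two estimates, dividing by $\mu(B)$, and using $1-c/2<1$ together with $\mu(B)>\delta$ and $\eta\ge\veps$ yields $2\exp(-c^2\delta k/8)>c\veps$, the desired contradiction.

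One small point worth flagging: the sampler definition the lemma refers to quantifies over $\veps$-dense \emph{measures} $w\in[0,1]^Y$, whereas your write-up handles only indicator measures of subsets $U\subseteq Y$. The argument goes through verbatim for general $w$ --- replace $\mathbf{1}_U(Y)$ by $w(Y)$, set $\eta=\E[w(Y)]$ and $\phi(x)=\E[w(Y)\mid X=x]$, and use $0\le w\le 1$ when discarding the complement of the Chernoff-good event --- so this is a presentational gap rather than a mathematical one. Your closing remark about the role of the $c/2$ split is also accurate: it is precisely what produces the $1/8$ in the exponent, and a different split would only change the constant.
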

\begin{lemma}[Direct Product Theorem for Verifiable Function]
    \label{lma:1.3}
    Let $(f, \mu)$ be a distributional problem such that $f$ is $t(n)$-time verifiable. Let $k \in \mathbb{N}, \delta > 0, \epsilon > 0$ be such that $\epsilon \geq 4 \exp(-\delta k / 32)$. If there exists a $T(n)$-time algorithm $\mathcal{M}$ for $(f^k, \mu^k)$ with success probability $\epsilon$, then there exists an $O(\epsilon^{-1}(T(n) + t(n)))$-time randomized algorithm $\mathcal{M}'$ that solves $(f, \mu)$ with success probability $1 - \delta$.
\end{lemma}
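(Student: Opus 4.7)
The plan is to combine the Direct Product Lemma (\Cref{lma:DirectProductLemma}) with a standard repeat-and-verify loop enabled by the $t(n)$-time verifier for $f$.

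First, I would instantiate \Cref{lma:DirectProductLemma} with $c = 1/2$. The sampler condition $2\exp(-kc^2 \delta/8)\le c\epsilon$ becomes $4\exp(-k\delta/32)\le \epsilon$, which is precisely the hypothesis. Hence the query graph of the reduction $\mathcal{R}$ from $(f,\mu)$ to $(f^k,\mu^k)$ is a $(\delta, 1/2)$-sampler for density $\epsilon$. Next, I would identify the appropriate dense set on the $Y$-side: fold the internal randomness of $\mathcal{M}$ into the input space (equipping it with the uniform distribution) so that $\mathcal{M}$ can be viewed as a deterministic map, and let $U\subseteq Y = X^k$ be the set of $\bar{x}$ on which $\mathcal{M}(\bar{x}) = f^k(\bar{x})$. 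By hypothesis, $\mu^k(U)\ge \epsilon$, so $U$ is $\epsilon$-dense, and the sampler property yields that for a $(1-\delta)$-fraction of inputs $x\sim \mu$, the conditional probability that the reduction produces $\bar{x}\in U$ is at least $\epsilon(1 - 1/2) = \epsilon/2$. Call such $x$ \emph{good}.

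Then, I would define $\mathcal{M}'$ as follows: on input $x$, independently invoke $\mathcal{R}^{\mathcal{M}}(x)$ a total of $N := \lceil (2/\epsilon)\ln(2/\delta)\rceil$ times, run the $t(n)$-time verifier on each candidate output, and return the first candidate that verifies (defaulting to $\bot$ otherwise). The key correctness point is that whenever $\bar{x}\in U$, the reduction returns the $i$-th coordinate of $\mathcal{M}(\bar{x})$, which by definition of $U$ equals $f(x)$, so the candidate is correct and the verifier accepts. Conversely, the verifier never accepts an incorrect output. Thus for good $x$ the probability of failure is at most $(1-\epsilon/2)^{N}\le \delta/2$, and
\[
\Pr_{x,\mathcal{M}'}\bigl[\mathcal{M}'(x) = f(x)\bigr] \;\ge\; (1-\delta)\cdot(1-\delta/2) \;\ge\; 1-\delta
\]
after the routine rescaling $\delta \mapsto \delta/2$ in the sampler parameter, which is absorbed in the constant $32$ in the hypothesis (or, alternatively, one simply obtains $1 - O(\delta)$, which is the intended conclusion). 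The running time is $N$ invocations each costing $T(n) + t(n)$, namely $O(\epsilon^{-1}\log(1/\delta)\cdot (T(n) + t(n))) = O(\epsilon^{-1}(T(n) + t(n)))$, where the $\log(1/\delta)$ factor is absorbed into the implied constant when $\delta$ is treated as a constant (as in the intended application).

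The main subtlety I anticipate is the proper treatment of $\mathcal{M}$'s randomness inside the sampler framework, since \Cref{lma:DirectProductLemma} is phrased for a fixed edge-weighted query graph. Folding the random tape of $\mathcal{M}$ into the input space (as above) is the cleanest resolution; an equivalent route is a Markov argument showing that a constant fraction of random tapes of $\mathcal{M}$ achieves conditional success probability $\ge \epsilon/2$, and restricting to those. Once this bookkeeping is settled, the rest of the argument is a direct application of the sampler plus Chernoff-style bounding of the failure probability across $N$ independent trials.
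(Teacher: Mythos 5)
The paper does not prove \Cref{lma:1.3} itself; it imports it from Hirahara--Shimizu and only sketches the idea in the proof overview (run $\mathcal{R}^{\mathcal{M}}$ repeatedly, verify, and use the sampler property of the query graph). Your proof reconstructs exactly that argument: instantiating \Cref{lma:DirectProductLemma} at $c=1/2$ to turn the hypothesis $\epsilon \ge 4\exp(-\delta k/32)$ into a $(\delta, 1/2)$-sampler, identifying the $\epsilon$-dense success set $U$, concluding that a $(1-\delta)$-fraction of $x$ see $\ge \epsilon/2$ conditional success, and then amplifying with $O(\epsilon^{-1})$ verified repetitions. This is the correct and intended route.

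Two small remarks. First, on the randomness of $\mathcal{M}$: rather than folding the random tape into $Y$ (which changes the query graph and would require re-deriving the sampler property), the cleaner move is already available in the paper's own sampler definition for \emph{measures}: set $w(\bar{x}) := \Pr_{\mathcal{M}}[\mathcal{M}(\bar{x}) = f^k(\bar{x})]$, note $\E_\nu[w]\ge\epsilon$, and read off $\E[w(Y)\mid X=x] \ge \epsilon/2$ for good $x$ directly. Second, as you flag, the constants do not quite close: with a $(\delta,1/2)$-sampler you lose a $\delta$-fraction of $x$ outright, and the repeat loop contributes an additional nonzero failure, so the literal conclusion is $1-O(\delta)$ rather than $1-\delta$; closing the gap exactly requires instantiating the sampler at, say, $\delta/2$, which in turn needs the slightly stronger hypothesis $\epsilon\ge 4\exp(-\delta k/64)$. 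This is a bookkeeping issue with the stated constants, not a structural gap, and does not affect the paper's application (where $\delta$ is a fixed constant like $0.01$). Similarly, the stated running time $O(\epsilon^{-1}(T(n)+t(n)))$ implicitly treats $\delta$ as constant so that the $\log(1/\delta)$ factor in the repetition count is absorbed, as you note.
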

In particular, \Cref{lma:1.3} provides us with the technique for the reduction. Suppose we have a verifier for our problem. Then, by choosing $k=O(\log (1/\veps)$, and performing the reduction $\mathcal{R}$, we can amplify the success rate of our algorithm to $1-\delta$ for a chosen $\delta$.

\subsection{Matrix-vector Product Verification (MvPV)}
For the Direct Product Theorem to work, we need a matrix-vector product verifier. Through fixed-point amplification, \cite{Asadi_Golovnev_Gur_Shinkar_Subramanian_2022} showed a circuit for flagging matrix-vector product in superposition with $O(n^{3/2}\cdot\log(\frac{1}{\veps}))$ queries to the matrix. Specifically, they proved the following lemma.
\begin{lemma}[Noisy quantum MvPV]
\label{lma:MvPV}
Suppose we are given a quantum oracle $U_M$ for a matrix $M \in \F^{n \times n}$ 
\[
U_M \, \ket{j, k, z } = \ket{j, k, z \oplus M_{jk}}
\]
for all indices $j, k \in [n]$ and $z \in \F$, and a noisy quantum algorithm $\text{\sf ALG}$, i.e.,
\[
\text{\sf ALG} \, \ket{v} = \beta^\nu_{\text{succ}} \, \ket{v} \ket{Mv} \ket{w_0(v)} + \beta^\nu_{\text{fail}} \ket{v} \ket{\Psi(v)}.
\]
Then there exists a gate-efficient quantum algorithm $\text{\sf ALG}_{\text{\sf verified}}$ that succeeds and outputs $Mv$ with probability $|\beta^\nu_{\text{succ}}|^2$ along with a flag indicating success, and similarly outputs a flag indicating failure whenever it outputs a vector $z \neq Mv$, with probability at least $(1 - \varepsilon)|\beta^\nu_{\text{fail}}|^2$. Furthermore, $\text{\sf ALG}_{\text{\sf verified}}$ uses $q = O\left(n^{3/2} \cdot \log \frac{1}{\varepsilon}\right)$ queries to $U_M$ and $U_M^\dagger$, $O(1)$ queries to $\text{\sf ALG}$, $O\left(q \log n \cdot \text{poly} \log |\mathbb{F}|\right)$ additional one-qubit and two-qubit gates, and $O(n \log |\mathbb{F}|)$ ancillary qubits.
\end{lemma}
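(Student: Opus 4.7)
The plan is to construct $\textsf{ALG}_{\textsf{verified}}$ by invoking $\textsf{ALG}$ once and then appending a coherent verification circuit $V$ that, given a candidate vector $z$ in the output register of $\textsf{ALG}$, computes a flag qubit indicating whether $Mv = z$. Running $\textsf{ALG}$ on input $\ket{v}$ yields the state $\beta_{\textsf{succ}}^v \ket{v}\ket{Mv}\ket{w_0(v)} + \beta_{\textsf{fail}}^v\ket{v}\ket{\Psi(v)}$; applying $V$ in superposition to the output register then raises the success flag on the first branch and, on almost all of the second branch, the failure flag. Measuring the flag at the end gives the behavior promised by the lemma, with the ``$O(1)$ queries to $\textsf{ALG}$'' coming from this single invocation.

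To build $V$, note that $Mv = z$ if and only if every row index $i \in [n]$ satisfies $(Mv)_i - z_i = 0$, so verifying inequality is a quantum OR-search over $n$ row indices. Each single-row predicate ``$\sum_{j} M_{ij} v_j \ne z_i$'' can be evaluated reversibly in $O(n)$ queries to $U_M$ by accumulating the inner product into an $\F$-valued ancilla register, comparing with $z_i$, and uncomputing. Because the number of violating rows depends on the (superposed) value of $z$ and could be anything from $0$ to $n$, I would drive the search with fixed-point amplitude amplification in the style of Yoder--Low--Chuang rather than plain Grover, which would otherwise miscalibrate its iteration count. Fixed-point amplification uses $O(\sqrt{n}\,\log(1/\veps))$ predicate evaluations and guarantees the flag is correct with probability at least $1-\veps$ uniformly in the number of marked rows. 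Composing the two costs gives the stated query complexity $q = O(n^{3/2} \log(1/\veps))$ to $U_M$ and $U_M^\dagger$, while the $O(q \log n \cdot \text{poly}\log|\F|)$ gate count and $O(n \log|\F|)$ ancillae come from implementing the arithmetic over $\F$ and the uncomputation scratchpad.

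The clean bounds ``success with probability $|\beta_{\textsf{succ}}^v|^2$'' and ``failure flag with probability at least $(1-\veps)|\beta_{\textsf{fail}}^v|^2$'' rely on an asymmetry: on the success branch the inner OR-search has \emph{no} marked rows, and any sensible fixed-point polynomial sends the zero-witness case to flag $1$ with amplitude exactly $1$, preserving the full $|\beta_{\textsf{succ}}^v|^2$ weight; the $\veps$-loss affects only candidate vectors $z \ne Mv$, i.e.\ portions of $\ket{\Psi(v)}$. The main technical obstacle I anticipate is the coherent error analysis of fixed-point amplification applied to a superposition over $z$-values whose numbers of marked rows vary wildly: one must verify that the Yoder--Low--Chuang polynomial simultaneously drives the zero-marked case exactly to flag $1$ and every nonzero-marked case to flag $1$ with amplitude at most $\sqrt{\veps}$, so that interference across $\ket{\Psi(v)}$ cannot leak amplitude back into the ``flag success but wrong vector'' subspace. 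Once this uniform bound is established, the two probability statements follow by projecting onto the flag register and computing the resulting squared norms on the ALG-success and ALG-failure branches separately.
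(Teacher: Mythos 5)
The paper does not prove this lemma at all --- it imports it verbatim from Asadi, Golovnev, Gur, Shinkar, and Subramanian, noting only that ``through fixed-point amplification'' they built a circuit that flags matrix-vector products in superposition with $O(n^{3/2}\log(1/\veps))$ queries. Your sketch is a faithful reconstruction of that argument: a per-row inner-product check ($O(n)$ queries to $U_M$), a fixed-point Grover search over the $n$ rows for a violation ($O(\sqrt{n}\log(1/\veps))$ predicate evaluations), and a single call to \textsf{ALG}, with the cost accounting and the source of the $\log(1/\veps)$ factor all in the right place.

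The one concern you flagged --- interference across the components of $\ket{\Psi(v)}$ --- does not actually arise, and it is worth seeing why. The verification circuit $V$ only \emph{reads} the $\ket{v}$ and $\ket{z}$ registers (they control the $\F$-arithmetic and the marking oracle) and writes into fresh ancillae and the flag qubit, so $V$ acts as $\ket{v}\ket{z}\ket{0}\mapsto\ket{v}\ket{z}\ket{\phi_{v,z}}$ and is block-diagonal with respect to the computational basis of the $(v,z)$ register. Consequently, distinct $z$-components of $\ket{\Psi(v)}$ stay mutually orthogonal after $V$, and projecting onto a flag outcome gives a probability that is a plain convex combination $\sum_{z\neq Mv}|\langle z|\Psi(v)\rangle|^2\,\Pr[\text{flag}=\text{fail}\mid z]$ with no cross terms. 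The per-$z$ fixed-point guarantee (at least $1-\veps$ for every $z\neq Mv$, and exactly $1$ for $z=Mv$ because a zero-marked uniform state is a fixed point of the Grover iterate) therefore lifts directly to the claimed $(1-\veps)|\beta^v_{\text{fail}}|^2$ and $|\beta^v_{\text{succ}}|^2$ bounds; the step you singled out as the main obstacle is in fact routine.
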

\noindent
This provides us with a verifier for our algorithm, by measuring the flag qubit after performing the procedure mentioned in \Cref{lma:MvPV}.

\section{Concatenation of Matrices}
In our reduction, we will have to operate on subvectors and submatrices in order to compute a solution for a subvector. Then, we will later concatenate these subvectors to form the full vector solution for the matrix-vector product. To do this, we need a process to concatenate and extract vectors.

\subsection{Concatenation of matrices via linear combination of unitaries}
We use a multiplexor circuit \cite{Bergholm_Vartiainen_Mottonen_Salomaa_2004} to build a circuit allowing us to achieve a uniform linear combination of unitaries. 
Specifically, consider the circuit:
\begin{center}
    \begin{quantikz}
        \lstick[4]{control}&&\gate[5][1.2cm][0.8cm]{M} &\\
        &&&\\
        \setwiretype{n}\vdots&&&\vdots\\
        &&&\\
        \lstick{target}&\qwbundle{}&&
    \end{quantikz}\, = \,
    \begin{quantikz}
    \lstick{\ket{c_0}} &  & \octrl{1} & \octrl{1} & \ \dots \ & \ctrl{1} & \rstick[5]{\ket{\psi}} \\
    \lstick{\ket{c_1}} &  & \octrl{1} & \octrl{1} & \ \dots \ & \ctrl{1} &  \\
    \lstick{\vdots}  & \setwiretype{n}\vdots & \vdots  & \vdots & \ddots & \vdots &  \\
    \lstick{\ket{c_{\log k}}} &  & \octrl{1}\wire[u]{q} & \ctrl{1}\wire[u]{q} & \ \dots \ & \ctrl{1}\wire[u]{q} & \\
    \lstick{\ket{t}} &\qwbundle{} & \gate{U_0} & \gate{U_1} & \ \dots \ & \gate{U_{k-1}} & 
    \end{quantikz}
\end{center}
where we have $\log k$ control qubits and some output register(s) $\ket{t}$. When the control qubits are a uniform superposition from $0$ to $k-1$, applying the multiplexor $M$ gives us the output
\[ \ket{\psi} = \frac{1}{\sqrt{k}}\sum_{i=0}^{k-1} \ket{i}U_k\ket{t} \]
With this in mind, we now describe a procedure that allows us to concatenate vectors and matrices in the quantum setting. 
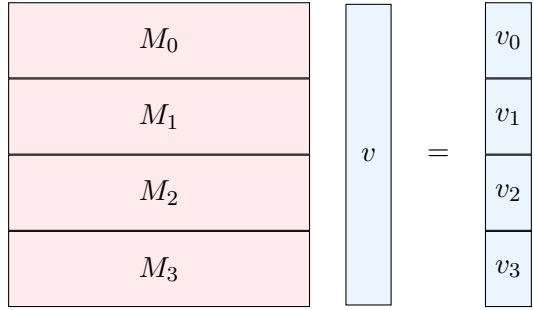
\begin{figure}[ht]
\centering
    \begin{tikzpicture}
        \definecolor{matrixcolor}{RGB}{255, 235, 235} 
        \definecolor{vectorcolor}{RGB}{235, 245, 255} 
    
        \matrix[matrix of math nodes, nodes={draw, minimum height=1cm, minimum width=4cm, fill=matrixcolor}, 
                row sep=0pt] (m) {
            M_0 \\
            M_1 \\
            M_2 \\
            M_3 \\
        };
        \matrix[matrix of math nodes, nodes={draw, minimum height=4cm, minimum width=0.6cm, fill=vectorcolor}, 
                right=0.2cm of m] (v1) {
            v \\
        };
    
        \node[right=0.2cm of v1] (label) {$ =$};
    
        \matrix[matrix of math nodes, nodes={draw, minimum height=1cm, minimum width=0.6cm, fill=vectorcolor}, 
                right=0.2cm of label] (v) {
            v_0 \\
            v_1 \\
            v_2 \\
            v_3 \\
        };
    
    \end{tikzpicture}
    \caption{Concatenation of submatrices to compute the full matrix-vector product.}
    \label{fig:1}
\end{figure}
First, recall that we let quantum algorithms access a matrix $M\in\F^{n\times n}$ via a unitary $U_M$ that does the following map
\[ U_M\ket{j, k, z} = \ket{j, k, z\oplus M_{jk}} \]
for all $j,k\in [n]$ and $z\in \F$, and $\oplus$ denotes addition over $\F$. Now, suppose that we have oracle access to matrices $M_0, ..., M_{k-1} \in \F^{d\times n}$ via unitaries $U_{M_0}, ..., U_{M_{k-1}}$, and that $k$ divides $n$ and $d = n/k$. We would like to construct a unitary $U_M$ for $M\in\F^{n\times n}$, the row-wise concatenation of $M_0, ..., M_{k-1}$, like shown in \hyperref[fig:1]{Figure 1}. By fixing $d$ as above, consider the floor division function $q : [n] \to [k]$ given by
\[ q(z) = \bigg\lfloor \frac{z}{d} \bigg\rfloor  \]
We can build a quantum circuit $U_{q}$ that maps $\ket{z}\ket{0} \to \ket{z}\ket{q(z)}$ for every $z\in\F$. Furthermore, using a quantum arithmetic circuit, we can build a unitary $U_r$ that maps $\ket{z}\ket{q(z)}\ket{0} \to \ket{z}\ket{q(z)}\ket{z - d\cdot q(z)}$. Putting all these together, we have the following:
\begin{lemma}
    \label{concat_lemma}

    Suppose we are given as input $N$ unitaries $U_0, U_1, ..., U_{N-1}$ that give oracle access to matrices $M_0, M_1, ..., M_{N-1} \in \F^{d\times n}$, i.e.
    \[ U_k\ket{i,j ,z} = \ket{i,j, z \oplus (M_k)_{i,j}} \]
    for all $k\in [N]$, and $i\in[d], j\in[n]$. Denote $m = Nd$. Then there is a unitary $U$ that makes 1 query to each unitary $U_i$, and gives oracle access to $M\in\F^{m\times n}$, the row-wise concatenation of the matrices $M_0, M_1, ..., M_{N-1}$, so that for all $i \in [m], j \in [n]$,
    \[ U\ket{i,j ,z} = \ket{i, j, z \oplus (M_{\lfloor i/d \rfloor})_{(i\bmod d),j}} \]
\end{lemma}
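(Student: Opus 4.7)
The plan is to implement $U$ by a \emph{compute--multiplex--uncompute} circuit: first write the quotient $q(i) = \lfloor i/d\rfloor$ and remainder $r(i) = i - d\cdot q(i)$ of the row index $i$ into two ancilla registers using the arithmetic circuits $U_q$ and $U_r$ introduced just before the lemma statement, then invoke the multiplexor circuit described above with the $q$-register as the control and the subregister containing $(r, j, z)$ as the target, selecting the branch $U_{q(i)}$, and finally uncompute the ancillas. The guiding observation is that row $i$ of the concatenated matrix $M$ is exactly row $r(i)$ of $M_{q(i)}$, so the construction of $U$ reduces to applying $U_{q(i)}$ coherently to a suitable subregister.

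In more detail, starting from a basis state $\ket{i,j,z}$ I would append two fresh ancillas of sizes $\lceil \log N\rceil$ and $\lceil \log d\rceil$ initialized to $\ket{0}$, then apply $U_q$ on $(i, \text{anc}_1)$ to reach $\ket{i,j,z}\ket{q(i)}\ket{0}$, and then $U_r$ on $(i, \text{anc}_1, \text{anc}_2)$ to reach $\ket{i,j,z}\ket{q(i)}\ket{r(i)}$. Now apply the multiplexor with control register $\text{anc}_1$ and target $(\text{anc}_2, j, z)$, configured so that its $k$-th branch is $U_k$. Since the control is in the computational basis state $\ket{q(i)}$, the multiplexor acts as $U_{q(i)}$ on $(\text{anc}_2, j, z)$, and because $U_k\ket{r,j,z} = \ket{r,j,z \oplus (M_k)_{r,j}}$ the overall state becomes $\ket{i, j, z \oplus (M_{q(i)})_{r(i), j}}\ket{q(i)}\ket{r(i)}$. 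Finally, I would apply $U_r^\dagger$ followed by $U_q^\dagger$, which return both ancillas to $\ket{0}$ because neither the value in the $i$ register nor the value $q(i)$ stored in $\text{anc}_1$ is touched by the multiplexor step.

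By linearity, this action on basis states extends to an arbitrary input superposition and yields precisely the unitary claimed in the lemma. For the query count, the only step that interacts with the input oracles is the multiplexor, which performs a single controlled invocation of each $U_k$ for $k \in [N]$, matching the one-query-per-oracle bound. The main subtlety — and essentially the only thing that needs checking — is that the ancilla uncomputation is clean. This requires the multiplexor to preserve the value of the $q$-ancilla (automatic, since it serves as the control) and the value of the $r$-ancilla (which holds because each $U_k$ only XORs into its output register and leaves its row and column index registers intact). The remaining ingredients, namely that $U_q$ and $U_r$ can be realized as reversible arithmetic circuits with polylogarithmic overhead and that the multiplexor behaves correctly on superpositions over the control register, are exactly what the preceding subsection already supplies.
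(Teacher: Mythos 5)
Your proof follows essentially the same compute--multiplex route as the paper: compute $q(i)=\lfloor i/d\rfloor$ with $U_q$, compute the remainder with $U_r$, then apply the multiplexor controlled on the $q$-register with branches $U_0,\dots,U_{N-1}$ acting on the (remainder, column, output) registers. The one place you go beyond the paper is the explicit uncomputation via $U_r^\dagger$ and $U_q^\dagger$; the paper's circuit stops after the multiplexor and simply says to ``look at'' the relevant registers, which technically leaves the two ancillas holding $q(i)$ and $r(i)$ entangled with the index register, so the realized unitary only matches the lemma's stated form after your uncomputation step. Your version is therefore a tightened form of the same argument, and the justification that the ancillas uncompute cleanly (the multiplexor control is basis-preserving and each $U_k$ leaves its index registers untouched) is correct.
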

\begin{center}
    \begin{quantikz}
    \lstick{\ket{i}} & \qwbundle{} && \gate[2]{U_{q}} & \gate[3]{U_r} &  &  &  & \rstick[5]{\ket{\psi}} \\
    \lstick{\ket{0}} & \qwbundle{} &&  &  &  \gate[4][2cm]{M}\gateinput{control} & &  &  \\
    \lstick{\ket{0}} & \qwbundle{} &&  &   &\gateinput[3]{target}  &  &  &  \\
    \lstick{\ket{j}} & \qwbundle{} &&  &  &  & &  & \\
    \lstick{\ket{0}} & \qwbundle{} &&   &  &  &  &  & 
    \end{quantikz}
\end{center}
where the multiplexor $M$ has unitaries $U_i$ for all $i\in[N]$.
\begin{proof}
The input state undergoes the following transformations through the circuit:
\begin{align*}
    \ket{i}\ket{0}\ket{0}\ket{j}\ket{0} &\xrightarrow{U_{q} \otimes \mathbbm{1} \otimes \mathbbm{1} \otimes \mathbbm{1}} \ket{i}\ket{ \lfloor i/d \rfloor}\ket{0}\ket{j}\ket{z} \\
    &\xrightarrow{U_r \otimes \mathbbm{1} \otimes \mathbbm{1}} \ket{i}\ket{ \lfloor i/d \rfloor}\ket{i\bmod{d}}\ket{j}\ket{z}  \\
    &\xrightarrow{\mathbbm{1} \otimes M} \ket{i}\ket{ \lfloor i/d \rfloor}\ket{i\bmod{d}}\ket{j}\ket{z\oplus (M_{\lfloor i/d \rfloor})_{(i\bmod d),j}} 
\end{align*}
By looking at the first register and the last 2 registers, we achieve what we want:
a concatenation of submatrices with the new row indexing. This circuit uses 1 query to each unitary $U_i$ in the multiplexor.
\end{proof}
With a row-wise concatenation procedure, we can similarly build a column-wise concatenation procedure. We state it as an immediate corollary, whose proof and circuit follows similarly from \Cref{concat_lemma}.
\begin{corollary}
\label{concat_lemma_col}
    Suppose we are given as input $N$ unitaries $U_0, U_1, ..., U_{N-1}$ that give oracle access to matrices $M_0, M_1, ..., M_{N-1} \in \F^{n\times d}$, i.e.
    \[ U_k\ket{i,j ,z} = \ket{i,j, z \oplus (M_k)_{i,j}} \]
    for all $k\in [N]$, and $i\in[n], j\in[d]$. Denote $m = Nd$. Then there is a unitary $U$ that makes 1 query to each unitary $U_i$, and gives oracle access to $M\in\F^{n\times m}$, the column-wise concatenation of the matrices $M_0, M_1, ..., M_{N-1}$, so that for all $i\in [n], j \in [m]$,
    \[ U\ket{i,j ,z} = \ket{i, j, z \oplus (M_{\lfloor j/d \rfloor})_{i,(j\bmod d)}} \]
\end{corollary}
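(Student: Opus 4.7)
The plan is to mirror the construction in \Cref{concat_lemma} with the roles of the row index $i$ and the column index $j$ swapped. Whereas in the row-wise case we split $\ket{i}$ into a block-selector register $\ket{\lfloor i/d\rfloor}$ and a local-row register $\ket{i \bmod d}$, here we will do exactly the same decomposition on the column register $\ket{j}$, and then feed the original row register $\ket{i}$ together with the freshly computed local-column register $\ket{j \bmod d}$ into the multiplexed oracle.

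Concretely, I would first build quantum arithmetic circuits $U_q$ and $U_r$ acting on $\ket{j}$ instead of $\ket{i}$, implementing $\ket{j}\ket{0} \mapsto \ket{j}\ket{\lfloor j/d\rfloor}$ and $\ket{j}\ket{\lfloor j/d\rfloor}\ket{0}\mapsto \ket{j}\ket{\lfloor j/d\rfloor}\ket{j \bmod d}$ respectively; these are constructed exactly as in the preceding lemma. Next, I would apply the same style of multiplexor $M$ controlled on the block-index register $\ket{\lfloor j/d\rfloor}$, selecting the appropriate $U_k$ and applying it to the triple $(\ket{i},\ket{j \bmod d},\ket{z})$. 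Since by assumption $U_k\ket{i}\ket{j'}\ket{z} = \ket{i}\ket{j'}\ket{z\oplus (M_k)_{i,j'}}$, the multiplexor writes the correct entry $(M_{\lfloor j/d\rfloor})_{i, j \bmod d}$ into the output register. Tracing the state through the circuit yields
\begin{align*}
\ket{i}\ket{j}\ket{0}\ket{0}\ket{z}
&\to \ket{i}\ket{j}\ket{\lfloor j/d\rfloor}\ket{0}\ket{z} \\
&\to \ket{i}\ket{j}\ket{\lfloor j/d\rfloor}\ket{j \bmod d}\ket{z} \\
&\to \ket{i}\ket{j}\ket{\lfloor j/d\rfloor}\ket{j \bmod d}\ket{z \oplus (M_{\lfloor j/d\rfloor})_{i,j\bmod d}},
\end{align*}
and restricting to the $\ket{i}$, $\ket{j}$, $\ket{z}$ registers delivers the promised oracle behaviour. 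The query cost is one call per $U_i$, entirely absorbed into the multiplexor, matching \Cref{concat_lemma}.

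I do not expect a genuine obstacle here; the construction is fully symmetric to the row-wise case, and essentially the only thing to get right is the bookkeeping: namely, that $U_q$ and $U_r$ now process the column wire and that it is the local-column register $\ket{j\bmod d}$ (and not a local-row register) that is paired with the untouched $\ket{i}$ when handed to the multiplexed oracle. A circuit diagram obtained from the one following \Cref{concat_lemma} by swapping the $\ket{i}$ and $\ket{j}$ rails would make this explicit, and the formal proof consists of writing out the three-step state evolution above.
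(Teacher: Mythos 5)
Your construction is correct and is precisely the "mirror image" argument the paper has in mind: the paper states this corollary as following immediately from \Cref{concat_lemma} by symmetry, and your explicit three-step state trace with $U_q, U_r$ acting on the column register and the multiplexor controlled on $\ket{\lfloor j/d\rfloor}$ is exactly that symmetric adaptation.
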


\subsection{Extraction of matrices and vectors}
For extraction of matrices and vectors, note that given a unitary $U_M$ for a matrix $M$, it is easy to retrieve a submatrix by simply applying the unitary on the desired indices. Specifically, quantum algorithms access a matrix $M\in\F^{n\times n}$ via a unitary that does the following map
\[ U_M\ket{j, k, z} = \ket{j, k, z\oplus M_{j,k}} \]
so by simply performing of shift of indices, we can get a unitary for the submatrix. Specifically, we have the following circuit for extracting a $d\times n$ submatrix from an $n\times n$ matrix. We wish to use an adder circuit $ADD_{p}$ defined by
\[ ADD_p \ket{z} = \ket{z \oplus p} \]
where $\oplus$ denotes addition over $\F$. Implementing such a gate requires only O(poly log $|\F|$) elementary two-qubit gates \cite{beauregard2003quantumarithmeticgaloisfields}.

\begin{lemma}
    \label{extraction_lemma}
    Suppose that we have oracle access to a matrix $M\in\F^{n\times n}$ via a unitary $U_M$. Then for any $0< d \le n$ and any $0\le p \le n-d$, there exists a unitary $U_{M'}$ that makes 1 query to $U_M$, O(poly log$|\F|$) elementary two-qubit gates, and provides oracle access to the submatrix $M'\in \F^{d\times n}$, where $M'_{i,j} = M_{p+i, j}$ for all $i\in[d],j\in[n]$. In other words,
    \[ U_{M'}\ket{i,j,z} = \ket{i, j,z \oplus M_{p+i, j}} \]
    for all $i\in [d], j\in[n]$.
\end{lemma}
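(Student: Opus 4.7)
The plan is to construct $U_{M'}$ as a conjugation of $U_M$ by a shift on the row index. Concretely, I would define
\[
U_{M'} := \bigl(ADD_p^\dagger \otimes \mathbbm{1} \otimes \mathbbm{1}\bigr) \cdot U_M \cdot \bigl(ADD_p \otimes \mathbbm{1} \otimes \mathbbm{1}\bigr),
\]
where $ADD_p$ acts on the first (row-index) register. First I apply $ADD_p$ to shift $\ket{i} \mapsto \ket{i \oplus p}$, where $\oplus$ is the addition over $\F$ promised by the adder; by the hypothesis $0 \le p \le n-d$ and since $i \in [d]$, the shifted value lies in $[n]$ and is a valid row index of $M$, so we may identify the shift with ordinary integer addition (no wrap-around occurs).

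Next, a single query to $U_M$ on the state $\ket{p+i, j, z}$ produces $\ket{p+i, j, z \oplus M_{p+i, j}}$ by the definition of the oracle. Finally, applying $ADD_p^\dagger$ on the first register uncomputes the shift and leaves
\[
\ket{i, j, z \oplus M_{p+i, j}} = \ket{i, j, z \oplus M'_{i, j}},
\]
which is precisely the action of $U_{M'}$ claimed in the statement. Since each step acts linearly on basis states in the natural way, the identity extends to arbitrary superpositions, establishing that $U_{M'}$ is the desired unitary.

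For the resource accounting, the construction makes exactly one query to $U_M$. The two adder calls $ADD_p$ and $ADD_p^\dagger$ each require $O(\poly \log |\F|)$ elementary two-qubit gates by the construction cited in the paragraph preceding the lemma, so the total additional gate cost is $O(\poly \log |\F|)$, matching the claim. The only place one must be careful is verifying that the arithmetic of $ADD_p$ does not produce an out-of-range index before the query to $U_M$; this is the sole use of the hypothesis $p \le n-d$, which guarantees $p+i \in [n]$ for every $i \in [d]$. I do not expect any serious obstacle here — the lemma is a straightforward index-shift construction, and the proof is essentially a verification of the three transformations above.
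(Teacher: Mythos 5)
Your construction is exactly the paper's: conjugate $U_M$ by the adder $ADD_p$ on the row-index register, and the three-step verification matches the paper's displayed chain of transformations. The only addition is your explicit note that $p\le n-d$ keeps the shifted index in range, which the paper leaves implicit; this is a minor clarification rather than a different approach.
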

\begin{center}
    \begin{quantikz}
    \lstick{\ket{i}} & \qwbundle{} & \gate[]{ADD_p} & \gate[3]{U_M} & \gate[]{ADD_p^\dagger} &  &  &  \\
    \lstick{\ket{j}} & \qwbundle{} &  &   &  &  &  &  \\
    \lstick{\ket{z}} & \qwbundle{} &   &  &  &  &  & 
    \end{quantikz}
\end{center}
\begin{proof}
    The circuit undergoes the following transformations through the circuit:
    \begin{align*}
        \ket{i,j,z} &\xrightarrow{ADD_p \otimes \mathbf{1}} \ket{i \oplus p, j, z} \\
        &\xrightarrow{U_M} \ket{i\oplus p, j, z \oplus M_{p+i, j}} \\
        &\xrightarrow{ADD_p^\dagger \otimes \mathbf{1}} \ket{i, j, z \oplus M_{p+i, j}} 
    \end{align*}
\end{proof}
Similarly, we can perform the same procedure for vectors, which is simpler since there is only 1 dimension of indices. We state it as an immediate corollary.
\begin{corollary}
    \label{extraction_lemma_vector}
    Suppose that we have oracle access to a vector $v\in\F^{n}$ via a unitary $U_v$. Then for any $0< d \le n$ and any $0\le p \le n-d$, there exists a unitary $U_{v'}$ that makes 1 query to $U_v$, O(poly log$|\F|$) elementary two-qubit gates, and provides oracle access to the subvector $v'\in \F^{d}$, where $v'_{i} = M_{p+i}$ for all $i\in[d]$. In other words,
    \[ U_{v'}\ket{i,z} = \ket{i,z \oplus v_{p+i}} \]
    for all $i\in [d]$.
\end{corollary}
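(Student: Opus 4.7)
The plan is to mirror the proof of \Cref{extraction_lemma} verbatim, except that we only need to shift a single index register instead of two. The underlying observation is that an oracle $U_{v'}$ for the subvector $v'$ with $v'_i = v_{p+i}$ differs from $U_v$ only by a reindexing $i \mapsto i + p$ on the input register, so it suffices to perform this reindexing before querying $U_v$ and then reverse it afterward to restore the original index in the output.

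Concretely, I would build the circuit that first applies $ADD_p$ to the index register, turning $\ket{i}\ket{z}$ into $\ket{i \oplus p}\ket{z}$ (where $\oplus$ is addition in $\F$, and since $0 \le p \le n-d$ and $0 \le i < d$ the sum $p+i$ lies in $[n]$ so this matches the integer shift). Next, apply the oracle $U_v$, which by definition sends $\ket{i \oplus p}\ket{z}$ to $\ket{i \oplus p}\ket{z \oplus v_{p+i}}$. Finally, apply $ADD_p^\dagger$ on the first register to uncompute the shift, yielding $\ket{i}\ket{z \oplus v_{p+i}}$, which is exactly the desired action of $U_{v'}$.

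The cost analysis is immediate: the circuit makes one query to $U_v$, and the adder $ADD_p$ together with its inverse contribute $O(\text{poly}\log|\F|)$ two-qubit gates by the implementation of \cite{beauregard2003quantumarithmeticgaloisfields} already cited in the statement of \Cref{extraction_lemma}. There is no meaningful obstacle here; this is a routine one-dimensional specialization of the matrix case, and the only thing to double-check is that the range $0 \le p \le n-d$ guarantees $p+i \in [n]$ for every $i \in [d]$ so that the query to $U_v$ is well-defined, which it is by construction.
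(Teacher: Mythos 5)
Your proof is correct and follows exactly the route the paper intends: shift the single index register with $ADD_p$, query $U_v$, and uncompute with $ADD_p^\dagger$, which is the one-dimensional specialization of \Cref{extraction_lemma} and matches the circuit the paper displays. The added sanity check that $0 \le p \le n-d$ keeps $p+i$ in range is a reasonable observation that the paper leaves implicit.
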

\begin{center}
    \begin{quantikz}
    \lstick{\ket{i}} & \qwbundle{} & \gate[]{ADD_p} & \gate[2]{U_v} & \gate[]{ADD_p^\dagger} &  &  &  \\
    \lstick{\ket{z}} & \qwbundle{} &   &  &  &  &  & 
    \end{quantikz}
\end{center}
\section{Reduction for Matrix-vector Multiplication}
The reduction for matrix-vector multiplication follows closely to \cite{Hirahara_Shimizu_2023}, with additional care taken to deal with concatenation and extraction of matrices.
\subsection{Proof of \texorpdfstring{\Cref{thm:main}}{\ref{thm:main}}}
In this section, we work towards proving \Cref{thm:main}. First, we say that a vector $v\in\F^n$ is good if $\Pr_{M, {\sf ALG}}[{\sf ALG}^{M, v} = Mv] \ge \alpha / 2$. Let $k$ be such that $n$ is divisible by $k$ and that $\alpha \ge 4\exp(-k/3200)$ so that the condition in \Cref{lma:1.3} is satisfied for $\delta=0.01$. Let $d = n/k$. Throughout this section, suppose that there exists a quantum algorithm {\sf ALG} that has oracle access to a matrix $M\in \F^{n\times n}$ and a vector $v$, makes $Q(n)$ queries, and satisfies
\[
\Pr_{\substack{M, v, \\ \text{\sf ALG}}} \left[ \text{\sf ALG}^{M, v} = Mv \right] 
= \underset{\substack{M \in \mathbb{F}^{n \times n} \\ v \in \mathbb{F}^n}}{\mathbb{E}}
\left[ \left\| \Pi_{Mv} \text{\sf ALG}^{M, v} \ket{0} \right\|^2 \right] \geq \alpha \,
\]
\begin{lemma}
\label{lma:1.5}
    At least $\frac{\alpha}{2}|\F^n|$ good vectors exists in $\F^n$.
\end{lemma}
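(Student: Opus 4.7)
The plan is to prove this by a standard averaging argument, essentially an application of Markov's inequality applied to the ``failure probability'' seen as a random variable over $v\in\F^n$. For each $v\in\F^n$, define
\[ p(v) := \Pr_{M,\,\text{\sf ALG}}\left[ \text{\sf ALG}^{M,v} = Mv \right] = \E_{M\in\F^{n\times n}}\left[ \left\| \Pi_{Mv}\,\text{\sf ALG}^{M,v}\ket{0}\right\|^2 \right]. \]
The hypothesis on \textsf{ALG} tells us that $\E_{v\in\F^n}[p(v)] \ge \alpha$, and clearly $0 \le p(v) \le 1$ for every $v$. The claim then becomes: the set $G = \{ v\in\F^n : p(v) \ge \alpha/2 \}$ has size at least $\tfrac{\alpha}{2}|\F^n|$.

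The proof is a one-line contrapositive/averaging argument. Suppose for contradiction that $|G| < \tfrac{\alpha}{2}|\F^n|$. Split the expectation into the sum over $G$ and its complement $B = \F^n\setminus G$, and bound each piece by the maximum value the term can take:
\[ \alpha \le \E_v[p(v)] = \frac{1}{|\F^n|}\left( \sum_{v\in G} p(v) + \sum_{v\in B} p(v) \right) \le \frac{|G|}{|\F^n|}\cdot 1 + \frac{|B|}{|\F^n|}\cdot \frac{\alpha}{2} < \frac{\alpha}{2} + \frac{\alpha}{2} = \alpha, \]
a contradiction. Hence $|G|\ge \tfrac{\alpha}{2}|\F^n|$, as required.

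I do not anticipate any real obstacle here: the statement is a direct averaging/Markov-type bound, and no quantum-specific machinery is needed beyond the fact that $p(v)$ is already defined as an ordinary probability taking values in $[0,1]$. The only mild care is to make sure the expectation in the hypothesis is interpreted as the iterated expectation $\E_v[p(v)]$, which follows immediately from linearity of expectation applied to the product distribution on $(M,v)$.
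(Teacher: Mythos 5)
Your proof is correct and uses essentially the same averaging argument as the paper: both split the expectation $\E_v[p(v)]$ according to whether $p(v)\ge\alpha/2$, bound the good part by $1$ and the bad part by $\alpha/2$, and conclude. The only cosmetic difference is that you phrase it as a proof by contradiction whereas the paper gives the direct one-line chain of inequalities.
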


\begin{proof}
    Let $p_v = \Pr_{M, {\sf ALG}}[{\sf ALG}^{M,v} = Mv]$. By our assumption of the average-case algorithm, we have $\E_{v}[p_v] \ge \alpha$. Then, note that
    \[ \alpha \le \E_{v}[p_v] \le 1\cdot \Pr_v[p_v \ge \alpha/2] + \alpha/2 \cdot\Pr_v[p_v < \alpha/2] \le \Pr_v[p_v \ge \alpha/2] + \alpha/2 \]
    It follows that $\Pr_v[p_v \ge \alpha/2] \ge \alpha/2$ as desired.
\end{proof}

\begin{lemma}
    \label{lma:1.6}
    Let $v\in\F^n$ be good, and $n$ divisible by $k$. Then there exists a quantum algorithm ${\sf ALG}_1$ that makes $O(\alpha^{-1}\cdot(Q(n)+n^{3/2}))$ queries to $U_M$ and satisfies
\[
\Pr_{M \sim \F^{d \times n}} \left[ \Pr_{{\sf ALG}_1} \left[ {\sf ALG}_1^{M, v} = Mv \right] \geq 0.99 \right] \geq 0.99 
\]
\end{lemma}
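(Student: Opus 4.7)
The plan is to instantiate the Hirahara--Shimizu one-query direct-product reduction in the quantum setting, using the row-wise concatenation of \Cref{concat_lemma} to splice the input $M$ together with fresh random sub-matrices, and using the quantum MvPV of \Cref{lma:MvPV} as the verifier demanded by \Cref{lma:1.3}. Concretely, let $f : \F^{d \times n} \to \F^{d}$ be defined by $f(A) = Av$ (with the good $v$ fixed) and let $\mu$ be uniform on $\F^{d \times n}$. Then $(f^{k},\mu^{k})$ is precisely the problem of computing $\bar{M}v \in \F^n$ for $\bar{M} \in \F^{n \times n}$ uniform, once $\bar{M}$ is identified with the row-wise concatenation of $k$ i.i.d.\ draws from $\mu$. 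Since $v$ is good, \textsf{ALG} is an average-case solver for $(f^{k},\mu^{k})$ with success probability $\ge \alpha/2$.

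On worst-case input $M$, first I would sample $i \in [k]$ uniformly and $M_j \sim \mu$ classically for $j \ne i$, then apply \Cref{concat_lemma} to build a unitary $U_{\bar{M}}$ in which $M$ occupies the $i$-th row-block. Because the $M_j$'s are sampled classically, each $U_{M_j}$ needs no queries to $U_M$ and only $O(\poly\log|\F|)$ gates, so every call to $U_{\bar{M}}$ costs exactly one query to $U_M$. Next I would wrap \textsf{ALG} with the quantum verifier of \Cref{lma:MvPV} applied to $U_{\bar{M}}$, obtaining a noisy algorithm that, whenever it raises a success flag, outputs $\bar{M}v$ correctly, hence $Mv$ after selecting the $i$-th $d$-block. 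Running the reduction with fresh randomness $\Theta(\alpha^{-1})$ times and returning the first verified answer, I would invoke \Cref{lma:1.3} with $\epsilon = \alpha/2$ and $\delta = 0.01$; the hypothesis on $k$ ensures $\alpha/2 \ge 4\exp(-\delta k/32)$ up to the stated constants. The sampler property of the reduction (\Cref{lma:DirectProductLemma}) then upgrades the resulting average-case success guarantee into the pointwise statement of \Cref{lma:1.6}. The query cost per iteration is $O(Q(n) + n^{3/2})$, giving the claimed $O(\alpha^{-1}(Q(n) + n^{3/2}))$ total.

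The main obstacle I expect is quantum-mechanical rather than combinatorial: one must check that the average quantum success probability $\mathbb{E}_{\bar{M}}\bigl[\lVert \Pi_{\bar{M}v}\textsf{ALG}^{\bar{M},v}\ket{0}\rVert^{2}\bigr] \ge \alpha/2$ transfers intact through concatenation and verification. The concatenation step is handled by \Cref{concat_lemma}, which yields a $U_{\bar{M}}$ acting identically to the true oracle, so \textsf{ALG}'s behaviour on $(\bar{M},v)$ is unchanged; the verifier's effect on the failure branch $\ket{\psi(\bar{M})}$ is entirely taken care of inside \Cref{lma:MvPV}. What remains is then the routine observation that, since $\bar{M}$ is uniform over $\F^{n\times n}$ for every fixed $M$ and every choice of $i$, the quantum success probability viewed as a function of $\bar{M}$ is precisely the measure to which \Cref{lma:DirectProductLemma} applies, yielding the pointwise $0.99/0.99$ guarantee.
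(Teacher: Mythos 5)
Your proposal matches the paper's proof essentially step for step: fix the good $v$, view $f(A) = Av$ on $\F^{d\times n}$, identify the $k$-wise direct product with the row-wise concatenation via \Cref{concat_lemma}, wrap \textsf{ALG} with the MvPV verifier of \Cref{lma:MvPV}, and repeat $O(\alpha^{-1})$ times, invoking \Cref{lma:1.3} for correctness and \Cref{lma:DirectProductLemma} for the query count. If anything, you are slightly more explicit than the paper about why the average-case success carries over under concatenation and about why the classically sampled $M_j$'s incur no extra $U_M$ queries, but the underlying argument is the same.
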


\begin{proof}
    For a given matrix $M\in\F^{d\times n}$, sample $k$ independent matrices $M_1, ..., M_k \sim \F^{d\times n}$ and index $i \sim [k]$. Construct a random matrix $\overline{M}\in\F^{n\times n}$ by aligning the matrices $M_1, ..., M_{i-1}, M, M_{i+1},..., M_k$ via \Cref{concat_lemma}, and denote its unitary by $U_{\overline{M}}$. Compute $w ={\sf ALG}^{\overline{M}, v}$. Let $w_1, ..., w_k$ be the subvector of $w$ where each $w_i\in\F^d$ consists of the $((i-1)d + 1)$-st to $id$-th elements of $w$. Then, use
    \Cref{lma:MvPV} to verify whether $\overline{M}v = w$. If so, return $w_i$ via \Cref{extraction_lemma_vector}. Repeat this whole procedure $O(1/\alpha)$ times. The correctness of $\sf ALG_1$ follows from \Cref{lma:1.3}. By \Cref{concat_lemma}, $U_{\overline{M}}$ uses 1 call to $U_M$ in each iteration, giving a total of $O(\alpha^{-1})$ queries to $U_M$, while \Cref{lma:DirectProductLemma} gives $O(\alpha^{-1}\cdot(Q(n)+n^{3/2}))$ queries to $U_M$.
\end{proof}

\begin{corollary}
\label{cor:1.7}
    Let $v\in\F^n$ be good, and $n$ divisible by $k$. Let $d = n / k$. Then for every $M\in\F^{d\times n}$, there exists a quantum algorithm ${\sf ALG}_2$ that makes $O(\alpha^{-1}\cdot(Q(n)+n^{3/2}))$ queries to $U_M$
\[
\Pr_{{\sf ALG}_2} \left[ {\sf ALG}_2^{M, v} = Mv \right] \geq 0.96
\]
\end{corollary}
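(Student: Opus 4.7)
The plan is to invoke the Blum-Luby-Rubinfeld style random self-reduction to convert the ``good-on-random-$M$'' guarantee of \Cref{lma:1.6} into a worst-case guarantee that holds for every $M\in\F^{d\times n}$, while $v$ stays fixed and good. First I would collapse the two layers of probability in \Cref{lma:1.6} into a single joint bound: since ${\sf ALG}_1$ succeeds with probability at least $0.99$ on a $0.99$-fraction of matrices $M'\in\F^{d\times n}$, we have $\Pr_{M'\sim\F^{d\times n},\,{\sf ALG}_1}\left[{\sf ALG}_1^{M', v} = M'v\right] \ge 0.99 \cdot 0.99 \ge 0.98$, so the failure probability taken jointly over $M'$ and the internal randomness is at most $0.02$.

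Given worst-case $M\in\F^{d\times n}$, I would construct ${\sf ALG}_2$ as follows. Sample a uniform $R\in\F^{d\times n}$ as part of the internal randomness of ${\sf ALG}_2$. Build a unitary oracle $U_{M+R}$ from $U_M$ by applying, after each call to $U_M$, a classically-controlled $ADD_{R_{j,k}}$ on the output register (indexed by the queried $(j,k)$); since $R$ is known classically, the oracle $U_R$ is built from scratch without querying $U_M$ at all. Then run ${\sf ALG}_1$ once on input $(M+R, v)$ to obtain $y_1\in\F^d$ and once on input $(R, v)$ to obtain $y_2\in\F^d$, and output $y_1 - y_2$.

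Correctness follows because the marginal distributions of $M+R$ and of $R$ are each uniform over $\F^{d\times n}$, so by the collapsed joint bound above each of the two calls to ${\sf ALG}_1$ fails with probability at most $0.02$. Applying the union bound (\Cref{union_bound}) to the two failure events, the probability that both calls succeed is at least $1 - 2\cdot 0.02 = 0.96$; when both succeed, $y_1 - y_2 = (M+R)v - Rv = Mv$. For the query count, the call on $(R, v)$ uses no queries to $U_M$, while the call on $(M+R, v)$ spends one query to $U_M$ per query to $U_{M+R}$, so the total is $O(\alpha^{-1}(Q(n) + n^{3/2}))$ queries to $U_M$. I do not anticipate a serious obstacle; the only point needing care is that the construction of $U_{M+R}$ from $U_M$ works because shifting the output register by the classically known $R_{j,k}$ commutes with the $U_M$ oracle call, so no additive-combinatorial or Fourier machinery is required and the BLR trick transports to the quantum setting verbatim.
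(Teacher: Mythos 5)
Your proposal is correct and takes essentially the same BLR-style approach as the paper's proof: both split $M$ into two marginally-uniform matrices (the paper writes $M = R_1 + R_2$ with $R_1$ uniform and $R_2 = M - R_1$, you write $M = (M+R) - R$), run ${\sf ALG}_1$ on each half, and apply the union bound to reach $0.96$. The only difference is cosmetic bookkeeping: you collapse the nested probability of \Cref{lma:1.6} into a single joint $\ge 0.98$ bound per call before union-bounding over two calls, whereas the paper union-bounds over four events, and you make explicit the (implicit) fact that the oracle for the classically-known random matrix costs zero queries to $U_M$.
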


\begin{proof}
    Given a matrix $M\in \F^{d\times n}$ and $v\in\F^n$, sample $R_1 \sim \F^{d\times n}$ uniformly randomly, and let $R_2 = M - R_1$. Using \Cref{lma:1.6}, compute ${\sf ALG_1}^{R_1,v}$ and ${\sf ALG_1}^{R_2,v}$, and denote their unitaries as $U_1$ and $U_2$ respectively. For the algorithm to succeed, we need 4 events to succeed: $R_1$ and $R_2$ each satisfying $\Pr_{\sf ALG_1}[{\sf ALG}_1^{M,v}= Mv] \ge 0.99$, and that both products $R_1v$ and $R_2v$ are computed correctly. All 4 events each have success probability $0.99$ by \Cref{lma:1.6}. By union bound (\Cref{union_bound}), we can obtain $U_1$ and $U_2$ with success probability $0.96$. By using the circuit below, we can compute ${\sf ALG_1}^{R_1,v} + {\sf ALG_1}^{R_2,v}$.
    \begin{center}
        \begin{quantikz}
        \lstick{\ket{i}} & \qwbundle{} & \gate[2]{U_{1}} & \gate[2]{U_{2}} &  &  \\
        \lstick{\ket{z}} & \qwbundle{} &   &  &  & 
        \end{quantikz}
    \end{center}
    where $\ket{i}$ is our row index register. The circuit undergoes the following transformation
    \begin{align*}
        \ket{i, z} &\xrightarrow{U_{1}} \ket{i, z \oplus (R_1)_i} \\
        &\xrightarrow{U_{2}} \ket{i, z \oplus (R_1v)_i \oplus (R_2v)_i}  = \ket{i, z \oplus (Mv)_i}
    \end{align*}
    
\end{proof}
\begin{lemma}
    \label{lma:1.8}
    Let $n$ divisible by $k$ and $d = n / k$. Then for every matrix $M\in\F^{d\times d}$. there exists a quantum algorithm ${\sf ALG}_3$ that makes $O(\alpha^{-2}\cdot(Q(n)+n^{3/2}))$ queries to $U_M$ and satisfies
\[
\Pr_{v \sim \F^{d}} \left[ \Pr_{{\sf ALG}_3 } \left[ {\sf ALG}_3^{M, v} = Mv \right] \geq 0.99 \right] \geq 0.99 
\]
\end{lemma}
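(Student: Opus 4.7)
The plan is to apply the direct-product reduction of Hirahara-Shimizu along the vector coordinate, mirroring \Cref{lma:1.6} but using $\sf ALG_2$ from \Cref{cor:1.7} as the base solver. Given $M \in \F^{d\times d}$ and $v \in \F^d$, I want to embed $v$ at a random block position inside a long vector $\bar{v} \in \F^n$, and build a matching matrix $\overline{M} \in \F^{d\times n}$ whose only nonzero block column is $M$ at the same position, so that $\overline{M}\bar{v} = Mv$ holds deterministically; invoking \Cref{cor:1.7} then reduces the problem to whether the random extension $\bar{v}$ happens to be good.

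Concretely, first sample $i \sim [k]$ and independent uniform $v_j \in \F^d$ for each $j \in [k]\setminus\{i\}$, and form $\bar{v} = (v_1,\dots,v_{i-1},v,v_{i+1},\dots,v_k) \in \F^n$, whose oracle is built from $U_v$ and the sampled $v_j$ via the vector analog of \Cref{concat_lemma}. In parallel, build $\overline{M} = [0 \mid \cdots \mid M \mid \cdots \mid 0] \in \F^{d\times n}$ via \Cref{concat_lemma_col}, using $U_M$ at block column $i$ and the identity unitary (which is exactly the oracle for a zero block, since $z \oplus 0 = z$) at all other $k-1$ positions, so that each query to $U_{\overline{M}}$ costs exactly one query to $U_M$. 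Now apply ${\sf ALG}_2$ from \Cref{cor:1.7} to the pair $(\overline{M}, \bar{v})$ to obtain a candidate output, and use the MvPV verifier of \Cref{lma:MvPV} to accept or discard it and retry with fresh randomness on failure.

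The correctness analysis will go through the sampler guarantee. By \Cref{lma:1.5} the set of good vectors in $\F^n$ is $(\alpha/2)$-dense, and under the direct-product reduction the marginal of $\bar{v}$ is uniform on $\F^n$, so this set is $\epsilon$-dense with $\epsilon = \alpha/2$. Setting $\delta = 0.01$ and $c = 1/2$, the standing assumption $\alpha \ge 4\exp(-k/3200)$ from the start of this section ensures the sampler condition of \Cref{lma:DirectProductLemma} (up to a harmless constant absorbable by a slight adjustment of $c$), and the sampler then yields: for at least $1 - \delta = 0.99$ fraction of $v \in \F^d$, the conditional probability that $\bar{v}$ is good is at least $(1-c)\epsilon = \alpha/4$. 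For every such $v$, one iteration succeeds with probability at least $(\alpha/4)\cdot 0.96 = \Omega(\alpha)$, and $O(1/\alpha)$ verified retries push the per-$v$ failure probability below $0.01$, exactly matching the statement.

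Counting queries, each iteration invokes ${\sf ALG}_2$ at a cost of $O(\alpha^{-1}(Q(n)+n^{3/2}))$ queries to $U_{\overline{M}}$, hence to $U_M$, together with $O(n^{3/2}\log(1/\alpha))$ verifier queries from \Cref{lma:MvPV}; over $O(1/\alpha)$ iterations this totals $O(\alpha^{-2}(Q(n)+n^{3/2}))$ queries to $U_M$ as claimed. The step that I expect to require the most care is the sampler application: I must ensure that the density of good vectors in $\F^n$ really lifts, for $0.99$-fraction of $v \in \F^d$, to a conditional density on the extensions $\bar{v}$, even though the conditional distribution of $\bar{v}$ given $v$ is not uniform but is weighted by the multiplicity of blocks equal to $v$. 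The matrix and vector concatenation lemmas of Section 4 handle the oracle bookkeeping essentially for free, since identity unitaries have no query cost and the multiplexor routes queries to the single nonzero block.
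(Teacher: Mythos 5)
Your proposal is correct and matches the paper's proof essentially step for step: you embed $v$ at a random block of $\bar v\in\F^n$, build $\overline M=[0\mid\cdots\mid M\mid\cdots\mid 0]\in\F^{d\times n}$ via column concatenation, run ${\sf ALG}_2$ plus the MvPV verifier, retry $O(1/\alpha)$ times, and invoke \Cref{lma:DirectProductLemma} with $(\delta,c)=(0.01,1/2)$ and density $\alpha/2$ to get that $0.99$-fraction of $v$ have $\Pr[\bar v\text{ good}]\ge\alpha/4$. The subtlety you flag about the conditional distribution of $\bar v$ given $v$ not being uniform is exactly what the sampler abstraction is built to absorb — \Cref{lma:DirectProductLemma} is stated for the query graph of this specific reduction, so no separate argument is needed — and your query count agrees with the paper's.
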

\begin{note}
    {$\sf ALG_3$} computes $Mv$ for a smaller matrix $M\in\F^{d\times d}$ and $v\in\F^d$.
\end{note}
\begin{figure}[ht]
\centering
    \begin{tikzpicture}
        \definecolor{matrixcolor}{RGB}{255, 235, 235} 
        \definecolor{vectorcolor}{RGB}{235, 245, 255} 
    
        \matrix[matrix of math nodes, nodes={draw, minimum height=1.2cm, minimum width=1cm, fill=matrixcolor}, 
               row sep=0pt] (m) {
            0 & M & 0 & 0 \\
        };
        \matrix[matrix of math nodes, nodes={draw, minimum height=1cm, minimum width=0.7cm, fill=vectorcolor}, 
                right=0.2cm of m] (v1) {
            v_1 \\ v \\ v_3 \\ v_4 \\
        };
    
        \node[right=0.2cm of v1] (label) {$ =$};
    
        \matrix[matrix of math nodes, nodes={draw, minimum height=1.2cm, minimum width=0.7cm, fill=vectorcolor}, 
                right=0.2cm of label] (v) {
            Mv \\
        };
    
    \end{tikzpicture}
    \caption{$M'\overline{v} = Mv$, where $M\in \F^{d\times d}$ and $v\in\F^d$ is embedded in the second block i.e. $i=2$.}
    \label{fig:2}
\end{figure}
\begin{proof}
    Let vector $v\in\F^d$ be given. Sample $v_1, ..., v_k \sim \F^d$ and $i \sim [k]$ and construct $\overline{v} = (v_1, ..., v_{i-1}, v, v_{i+1}, ..., v_k) \in \F^n$ via \Cref{concat_lemma}. Let $S\subset (R^d)^k$ be the set of $k$-tuple of vectors $(v_1, ..., v_k)$ such that the concatenation $(v_1,..., v_k)\in\F^n$ is good. From \Cref{lma:1.5}, $|S| \ge \frac{\alpha}{2}|\F^d|^k$. Let $H= \st{ v\in\F^d : \Pr[\overline{v} \in S] \ge 0.25\alpha}$. From \Cref{lma:DirectProductLemma}, given our choice of $k$, we have that the query graph is a $(0.01, 0.5)$-sampler and since $S\subset \F^n$ is an $\alpha/2$-dense subset, we have that
    \[ \Pr_{v\in\F^n}\left[ \Pr[\overline{v}\in S ]\le \frac{\alpha}{2}\cdot \frac{1}{2} \right] \le 0.01 \]
    Rearranging gives us that
    \[ \Pr_{v\in\F^n}\left[ v\in H \right] = \Pr_{v\in\F^n}\left[ \Pr[\overline{v}\in S]\ge \frac{\alpha}{4} \right] \ge 0.99 \]
    so that $|H| \ge 0.99|\F^d|$. Suppose $v$ is embedded in the $i$-th block of $\overline{v}$. Let $M' \in \F^{d\times n}$ be the matrix such that the $i$-th block is $M$ and the other elements are set to zero, i.e. $M'_{[d], I_i} = M$, where $I_i = \st{(i-1)d + 1, ..., id}$ (see \Cref{fig:2}). By \Cref{concat_lemma_col}, we can construct a unitary for $M'$. Then, using $\sf ALG_2$, compute $w = {\sf ALG_2}^{M', \overline{v}}$ and use \Cref{lma:MvPV} to verify whether $M', \overline{v} = Mv$. If so, return $w$ via \Cref{extraction_lemma_vector}. Otherwise, sample $\overline{v}$ again and repeat.
    \\
    \\
    For correctness, note that $v\in H$ with probability 0.99 if $v$ is chosen uniformly random. Furthermore, if $v\in H$, then $\overline{v}$ is good with probability $0.25\alpha$. If $\overline{v}$ is good, $\sf ALG_2$ computes $M'\overline{v}$ in $O(\alpha^{-1}\cdot(Q(n)+n^{3/2}))$ queries to $U_M$. Together, we have that with probability 0.99 over the choice of $\overline{v}$, the algorithm will terminate within $O(1/\alpha)$ iterations. In each iteration, the unitary for $M'$ makes 1 query to $U_M$. We also make 1 query to $\sf ALG_2$ and the verifier. Together, we get that $\sf ALG_3$ makes $O(\alpha^{-2}\cdot(Q(n)+n^{3/2}))$ queries to $U_M$.
\end{proof}
\begin{corollary}
    \label{cor:1.10}
    Let $n$ divisible by $k$ and $d = n / k$. Then for every $v\in\F^d$ and $M\in\F^{d\times d}$, there exists a quantum algorithm ${\sf ALG}_4$ that makes $O(\alpha^{-2}\cdot(Q(n)+n^{3/2}))$ queries to $U_M$ and satisfies
\[
 \Pr_{{\sf ALG}_4}  \left[ {\sf ALG}_4^{M, v} = Mv \right] \geq 0.96
\]
\end{corollary}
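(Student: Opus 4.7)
\noindent\textbf{Proof proposal for \Cref{cor:1.10}.} The plan is to mirror the Blum--Luby--Rubinfeld style reduction already used in \Cref{cor:1.7}, but now randomising the vector rather than the matrix. The key observation is that \Cref{lma:1.8} gives an algorithm $\sf ALG_3$ which, for every fixed $M\in\F^{d\times d}$, succeeds with probability at least $0.99$ on a $0.99$-fraction of vectors in $\F^d$. Given an arbitrary worst-case vector $v\in\F^d$, I can expose $\sf ALG_3$ to uniformly random inputs by secret-sharing $v$ additively, and then stitch the two partial answers back together with a short quantum circuit, exactly as done in \Cref{cor:1.7}.

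First I would sample $v_1\sim\F^d$ uniformly at random and set $v_2 = v - v_1$, which makes both $v_1$ and $v_2$ individually uniform over $\F^d$ (though of course dependent). An oracle $U_{v_1}$ is free since we constructed $v_1$, and an oracle $U_{v_2}$ can be obtained from $U_v$ and $U_{v_1}$ by one application of each, combined via an $ADD$ gate (with its inverse to uncompute the scratch register). I would then run $\sf ALG_3^{M,v_1}$ and $\sf ALG_3^{M,v_2}$, obtaining unitaries $U_1$ and $U_2$ acting as $\ket{i,z}\mapsto \ket{i, z\oplus (Mv_j)_i}$ on the output register whenever each call is in the success regime.

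For correctness I would invoke four events and take a union bound (\Cref{union_bound}): (i) $v_1$ lies in the $0.99$-dense set of good random inputs from \Cref{lma:1.8}, (ii) $v_2$ lies in the same set, (iii) the execution of $\sf ALG_3^{M,v_1}$ actually outputs $Mv_1$, and (iv) the execution of $\sf ALG_3^{M,v_2}$ actually outputs $Mv_2$. Each event fails with probability at most $0.01$, so the composite procedure succeeds with probability at least $1 - 4\cdot 0.01 = 0.96$. When all four events hold, the same short circuit as in \Cref{cor:1.7},
\begin{center}
    \begin{quantikz}
    \lstick{\ket{i}} & \qwbundle{} & \gate[2]{U_{1}} & \gate[2]{U_{2}} &  &  \\
    \lstick{\ket{z}} & \qwbundle{} &   &  &  &
    \end{quantikz}
\end{center}
implements the map $\ket{i,z}\mapsto \ket{i, z\oplus (Mv_1)_i \oplus (Mv_2)_i} = \ket{i, z\oplus (Mv)_i}$, producing an oracle for $Mv$ as required.

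For the query count, each of the two invocations of $\sf ALG_3$ uses $O(\alpha^{-2}\cdot(Q(n)+n^{3/2}))$ queries to $U_M$ by \Cref{lma:1.8}, and the construction of $U_{v_2}$ costs a constant number of queries to $U_v$ and $U_{v_1}$ together with $O(\textrm{poly}\log|\F|)$ extra gates. The total is $O(\alpha^{-2}\cdot(Q(n)+n^{3/2}))$ queries to $U_M$, matching the claimed bound. The main conceptual point --- and the only mild subtlety --- is ensuring that $v_2 = v - v_1$ is presented to $\sf ALG_3$ as a coherent oracle rather than a classical description, which is handled cleanly by composing $U_v$ with the freely constructible $U_{v_1}$; there is no deeper obstacle once this bookkeeping is in place.
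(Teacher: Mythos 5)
Your proof is correct and follows essentially the same route as the paper's: additively secret-share $v$ into two uniform shares $r_1$ and $r_2 = v - r_1$, run $\sf ALG_3$ on each, union-bound the four failure events, and combine the two output oracles with the same two-gate circuit used in \Cref{cor:1.7}. The only addition beyond the paper's terse proof is your explicit remark on building $U_{v_2}$ coherently from $U_v$ and $U_{v_1}$, which is a helpful clarification rather than a divergence.
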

\begin{proof}
    Similar to \Cref{cor:1.7}, given $v\in\F^d$ and $M\in\F^{d\times d}$, ${\sf ALG_4}$ samples $r_1\sim \F^d$ and let $r_2 = v - r_1$. Then using \Cref{lma:1.8}, compute ${\sf ALG_3}^{M,r_1} + {\sf ALG_3}^{M, r_2}$ similarly to \Cref{cor:1.7}. By union bound (\Cref{union_bound}), ${\sf ALG_4}$ succeeds with probability 0.96, giving us $Mr_1 + Mr_2 = Mv$.
\end{proof}
\begin{thm}
\label{thm:main}
    Let $\F = \F_p$ be a prime field, $n\in\N$, and $\alpha := \alpha(n) \in (0, 1]$. Suppose that there exists a quantum algorithm {\sf ALG} that has oracle access to a matrix $M\in \F^{n\times n}$ and a vector $v$, makes $Q(n)$ queries, and satisfies
\[
\Pr_{\substack{M, v, \\ \text{\sf ALG}}} \left[ \text{\sf ALG}^{M, v} = Mv \right] 
= \underset{\substack{M \in \mathbb{F}^{n \times n} \\ v \in \mathbb{F}^n}}{\mathbb{E}}
\left[ \left\| \Pi_{Mv} \text{\sf ALG}^{M, v} \ket{0} \right\|^2 \right] \geq \alpha \, .
\]
Then, for every constant $\delta > 0$, there exists a worst-case algorithm $\sf ALG'$ that makes $O(\alpha^{-2})$ queries to $\sf ALG$, makes $O((Q(n)+n^{3/2})\cdot \alpha^{-2}\cdot\log ^2(1/\alpha) \log\log(1/\alpha) )$ queries to $U_M$ and $U_M^\dagger$, and succeeds on all inputs with high probability:
\[
\forall M\in\F^{n\times n}, v\in\F^n, \Pr_{\text{\sf ALG}'}\left[(\text{\sf ALG}')^{M,v} = Mv\right] = 
\left\| \Pi_{Mv} (\text{\sf ALG}')^{M,v} |0\rangle \right\|^2 \geq 1 - \delta.
\]
\end{thm}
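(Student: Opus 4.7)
The plan is to bootstrap $\sf ALG_4$ from \Cref{cor:1.10}, which solves worst-case MvM on $d \times d$ instances with success probability $0.96$, into a worst-case algorithm for the full $n \times n$ instance. Partition $M \in \F^{n \times n}$ into a $k \times k$ grid of $d \times d$ blocks $\{M_{ij}\}_{i,j\in[k]}$ and $v \in \F^n$ into $k$ consecutive sub-vectors $v_1, \dots, v_k$, so that the $i$-th $d$-block of $Mv$ equals $\sum_{j=1}^{k} M_{ij} v_j$. The full product thus reduces to $k^2$ independent $d$-dimensional sub-products, which are then summed along each block row and concatenated across block rows to form the answer.

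To ensure a final success probability of at least $1-\delta$, I would first amplify each sub-product individually to succeed with probability at least $1-\delta/k^2$. Using the verifier from \Cref{lma:MvPV}, this is accomplished by running $\sf ALG_4$ on $(M_{ij}, v_j)$ a total of $O(\log(k^2/\delta)) = O(\log k)$ times and returning the first verified output. Since $k = \Theta(\log(1/\alpha))$ (the minimal $k$ for which $\alpha \geq 4 e^{-k/3200}$ holds, as used throughout the preceding lemmas), this incurs only an $O(\log\log(1/\alpha))$ multiplicative overhead per sub-product. To feed $(M_{ij}, v_j)$ into $\sf ALG_4$, I would apply a two-dimensional variant of \Cref{extraction_lemma} (with $ADD_p$ gates shifting both row indices by $id$ and column indices by $jd$) to obtain a unitary oracle for the $d \times d$ block $M_{ij}$, and \Cref{extraction_lemma_vector} to obtain the sub-vector oracle for $v_j$.

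Once every $w_{ij} := M_{ij} v_j$ is obtained correctly with probability $\geq 1-\delta/k^2$, the union bound (\Cref{union_bound}) over the $k^2$ sub-products yields overall correctness with probability at least $1 - \delta$. The $i$-th $d$-block of $Mv$ is the sum $w_{i,1} + \cdots + w_{i,k}$, which can be assembled coherently on a shared output register in the style of \Cref{cor:1.7} (sequentially applying $k$ unitaries, each of which adds its partial product to the target register); the $k$ answer blocks are then concatenated across $i \in [k]$ via a vector analogue of \Cref{concat_lemma}. For query accounting, each boosted $\sf ALG_4$ invocation costs $O(\log k \cdot \alpha^{-2}(Q(n)+n^{3/2}))$ queries to $U_M$, and multiplying by the $k^2 = O(\log^2(1/\alpha))$ sub-products yields the claimed $O((Q(n)+n^{3/2}) \cdot \alpha^{-2} \cdot \log^2(1/\alpha)\log\log(1/\alpha))$ total.

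I expect the main obstacle to be keeping the per-sub-problem failure budget tight enough that the union bound over all $k^2$ products closes at $\delta$, together with the bookkeeping for how calls to $\sf ALG$ cascade through $\sf ALG_1 \to \sf ALG_2 \to \sf ALG_3 \to \sf ALG_4$ and up into the final block aggregator (yielding the $O(\alpha^{-2})$ count on ALG-queries and the polylog overhead only on $U_M$-queries). The remainder is a mild generalization of circuits already built in Section~4: lifting the extraction machinery of \Cref{extraction_lemma} from $d \times n$ slices to $d \times d$ blocks via two simultaneous $ADD_p$ shifts, and replacing the row-wise concatenation of \Cref{concat_lemma} by its immediate vector analogue to stitch the $k$ output blocks together.
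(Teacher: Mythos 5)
Your proposal matches the paper's proof essentially step for step: the same $k \times k$ block decomposition with $k = O(\log(1/\alpha))$ and $d = n/k$, the same boosting of $\sf ALG_4$ to failure probability $O(1/k^2)$ per block via $O(\log k)$ repetitions with the MvPV verifier, the same union bound over the $k^2$ sub-products, the same coherent summation along block rows, the same concatenation of the $k$ answer blocks, and the same final query accounting. You are even slightly more careful than the paper on one point, by making explicit the two-dimensional variant of \Cref{extraction_lemma} (shifting both row and column indices) needed to extract a $d\times d$ block, which the paper invokes without spelling out.
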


\begin{proof}
    Let $k=O(\log (1/\alpha))$. Suppose $k$ divides $n$ (by padding), and let $d = n/k$. For a given $M\in\F^{n\times n}$ and $v\in\F^n$, divide them into $M^{(i,j)}\in \F^{d\times d}$ via \Cref{extraction_lemma} and $v^{(i)}\in \F^d$ via \Cref{extraction_lemma_vector}, as shown in \Cref{fig:3}. Formally, let $I_i = \st{(i-1)d+1,..., id}$ and denote $M^{(i,j)} = M_{I_i, I_j}$ and $v^{(i)} = v_{I_i}$. For each $i, j\in [k]$, using \Cref{cor:1.10}, we can run $\sf ALG_{4}^{M^{(i,j)}, v^{(j)}}$ for $O(\log k)$ times to obtain $M^{(i,j)}v^{(j)}$ with probability $1-0.01/k^2$. Denote the unitary for each $M^{(i,j)}v^{(j)}\in \F^d$ as $U_{i,j}$. Then, by union bound (\Cref{union_bound}), with probability 0.99, we can obtain $M^{(i,j)} v^{(j)}$ for all $i,j\in[k]$ via $k^2$ unitaries. Then, for each $i\in[k]$, we can obtain a unitary for the vector $(Mv)^{(i)} = \sum_{j\in [k]} M^{(i,j)} v^{(j)}$ by simply applying each unitary $U_{i,j}$ for all $j\in [k]$. 
    \begin{center}
        \begin{quantikz}
        \lstick{\ket{m}}&\qwbundle{}&\gate[2][1.2cm][0.8cm]{(Mv)^{(i)}} &\\
        \lstick{\ket{z}}&\qwbundle{}&&
        \end{quantikz}\, = \,
        \begin{quantikz}
        \lstick{\ket{m}} & \qwbundle{} & \gate[2]{U_{i,1}} & \gate[2]{U_{i,2}} & \ \dots \ & \gate[2]{U_{i,k}} &  &  \\
        \lstick{\ket{z}} & \qwbundle{} &   &  & \ \dots \ &  &  & 
        \end{quantikz}
    \end{center}
    where $\ket{m}$ is our row index register. The circuit undergoes the following transformation
    \begin{align*}
        \ket{m, z} &\xrightarrow{U_{i,1}} \ket{m, z \oplus (M^{(i,1)}v^{(1)})_m} \\
        &\xrightarrow{U_{i,2}} \ket{m, z \oplus (M^{(i,1)}v^{(1)})_m \oplus (M^{(i,2)}v^{(2)})_m} \\
        \vdots \\
        &\xrightarrow{U_{i,k}} \ket{m, z \oplus \sum_{j\in[k]}(M^{(i,j)}v^{(j)})_m} = \ket{m, z\oplus ((Mv)^{(i)})_m} \\
    \end{align*}
    So that we have a unitary for the vector $(Mv)^{(i)} \in\F^d$. Using \Cref{concat_lemma}, we can get a unitary for $Mv \in \F^n$ that makes 1 query to each $(Mv)^{(i)}$. Each $(Mv)^{(i)}$ queries $M^{(i,j)}v^{(j)}$ once for each $j\in [k]$, and thus takes $O(k\log k)$ queries of $\sf ALG_4$. Consequently, $Mv$ makes $O(k^2\log k)$ queries to $\sf ALG_4$ and thus makes $ O((Q(n)+n^{3/2})\cdot \alpha^{-2}\cdot\log ^2(1/\alpha)\cdot \log\log(1/\alpha) )$ calls to $U_M$.
\end{proof}

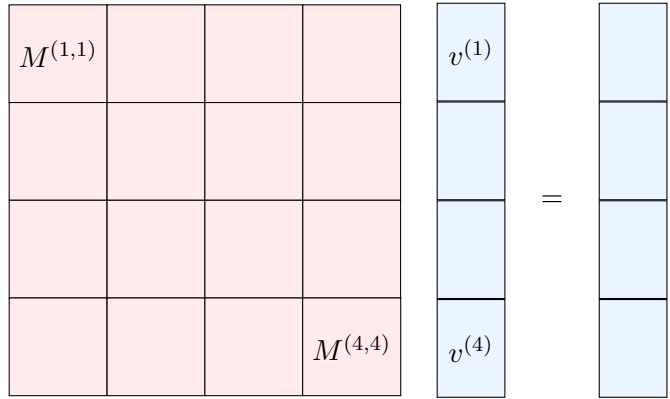
\begin{figure}[ht]
\centering
    \begin{tikzpicture}
        \definecolor{matrixcolor}{RGB}{255, 235, 235} 
        \definecolor{vectorcolor}{RGB}{235, 245, 255} 
    
        \matrix[matrix of math nodes, nodes={draw, anchor=center, minimum height=1.3cm, minimum width=1.3cm, fill=matrixcolor}, 
              column sep=-\pgflinewidth, row sep=-\pgflinewidth] (m) {
            M^{(1,1)} &  &  &  \\
             &  &  &  \\
             &  &  &  \\
             &  &  & M^{(4,4)} \\
        };
        \matrix[matrix of math nodes, nodes={draw, anchor=center, minimum height=1.3cm, minimum width=0.9cm, fill=vectorcolor}, 
                row sep=0pt, right=0.2cm of m] (v1) {
            v^{(1)} \\  \\  \\ v^{(4)} \\
        };
    
        \node[right=0.2cm of v1] (label) {$ =$};
    
        \matrix[matrix of math nodes, nodes={draw, minimum height=1.3cm, minimum width=0.9cm, fill=vectorcolor}, 
                row sep=0pt, right=0.2cm of label] (v) {
            \\ 
            \\ 
            \\ 
            \\
        };
    
    \end{tikzpicture}
    \caption{$M^{(i,j)}$ and $v^{(j)}$ for $k=4$.}
    \label{fig:3}
\end{figure}

\bibliographystyle{alpha}
\bibliography{references.bib}
\end{document}